\title{Online Learning under Budget and ROI Constraints via Weak Adaptivity}
\author{%
  Matteo Castiglioni\thanks{Equal contribution. Correspondence to: <andrea.celli2@unibocconi.it>.}\\
  Politecnico di Milano\\
  \And
  Andrea Celli$^{\ast}$\\
  Bocconi University \\
  \And
  Christian Kroer\\
  Columbia University\\
}
\begin{document}

\maketitle

\begin{abstract}
    We study online learning problems in which a decision maker has to make a sequence of costly decisions, with the goal of maximizing their expected reward while adhering to budget and  return-on-investment (ROI) constraints. 
    Existing primal-dual algorithms designed for constrained online learning problems under adversarial inputs rely on two fundamental assumptions.
    First, the decision maker must know beforehand the value of parameters related to the degree of strict feasibility of the problem (i.e. Slater parameters). Second, a strictly feasible solution to the offline optimization problem must exist at each round. Both requirements are unrealistic for practical applications such as bidding in online ad auctions. 
    In this paper, we show how such assumptions can be circumvented by endowing standard primal-dual templates with \emph{weakly adaptive} regret minimizers. This results in a ``dual-balancing'' framework which ensures that dual variables stay sufficiently small, even in the absence of knowledge about Slater's parameter.
    We prove the first \emph{best-of-both-worlds} no-regret guarantees which hold in absence of the two aforementioned assumptions, under stochastic and adversarial inputs.
    Finally, we show how to instantiate the framework to optimally bid in various mechanisms of practical relevance, such as first- and second-price auctions.
\end{abstract}

\section{Introduction}

A decision maker takes decisions over $T$ rounds. At each round $t$, the decision $x_t\in\cX$ is chosen before observing a reward function $f_t$ together with a set of \emph{time-varying} constraint functions. 
The decision maker is allowed to make decisions that are \emph{not} feasible, provided that the overall sequence of decisions obeys the \emph{long-term constraints} over the entire time horizon, up to a small cumulative violation across the $T$ rounds. 
The goal of the decision maker is to maximize their cumulative reward, while satisfying the long-term constraints. This model was first proposed by \citet{mannor2009online} and later developed along various directions \citep{mahdavi2012trading,jenatton2016adaptive,liakopoulos2019cautious,yu2017online,castiglioni2022unifying}.

Motivated by applications in online ad auctions, we consider the case in which the decision maker has a budget and \emph{return-on-investment} (ROI) constraint \citep{auerbach2008empirical,golrezaei2021bidding,golrezaei2021auction}. The decision maker is subject to bandit feedback: at each time $t$, the decision maker takes a decision $x_t$ and then observes the realized reward $f_t(x_t)$ and a cost $c_t(x_t)$. Inputs $(f_t,c_t)$ can either be generated i.i.d. according to some unknown distribution, or selected by an oblivious adversary. 

A key challenge of our model is that ROI constraints are not \emph{packing}, thereby preventing the direct application of known algorithms for \emph{adversarial bandits with knapsacks} (ABwK) \citep{immorlica2019adversarial,Castiglioni2022Online}, or for online allocation problems with resource-consumption constraints \citep{balseiro2022best}. Moreover, previous work addressing the adversarial case with non-packing constraints makes the assumption that the ``worst-case feasibility'' with respect to all constraint functions observed up to $T$ is strictly positive \citep{sun2017safety,castiglioni2022unifying,immorlica2019adversarial,balseiro2022best}. In other words, there has to exist a ``safe'' policy guaranteeing that, at each $t$, the constraints can be satisfied by a margin at least $\alpha>0$, which has to be known in advance by the learner.
This can be problematic for at least two reasons: \textbf{i) $\alpha$ may not be known in advance to the decision maker, and ii) the safe policy may not exist in all rounds $t$.}
For example, in the case of bidding in repeated ad auctions under budget and ROI constraints, such assumptions do not hold. In particular, the decision maker would be required to have an action yielding expected ROI  strictly above their target for each round $t$. If we assume one ad placement is being allocated at each $t$ then this assumption is equivalent to assuming that the bidders' value is always strictly higher than the highest competing bid, which clearly would not hold in practice.
In this paper, we propose a general approach to enhance primal-dual frameworks based on the template by \citet{immorlica2019adversarial}, in order to obtain \emph{best-of-both-worlds} no-regret guarantees while bypassing both assumptions.

\subsection{Contributions} 

Previous primal-dual templates for adversarial inputs assume that the parameter $\alpha>0$ is known (see, \eg \cite{immorlica2019adversarial,balseiro2022best}). This is crucial to ensure boundedness of Lagrange multipliers, which is typically obtained by requiring the $\ell_1$-norm of the multipliers to be less than or equal to $1/\alpha$ \citep{Castiglioni2022Online,nedic2009approximate}.
The key contribution of the paper is showing that, in absence of any information on $\alpha>0$, if we require the primal and dual regret minimizers to be \emph{weakly adaptive} (\ie to guarantee sublinear regret on any interval $[t_1,t_2]\subseteq [T]$ \citep{hazan2007adaptive}), then boundedness of multipliers automatically emerges as a byproduct of the interaction between the primal and dual algorithms (thereby \textbf{solving challenge (i)}). 
Moreover, our framework only requires the existence of a safe policy ``frequently enough'', and not at all time steps $t$ (thereby \textbf{solving challenge (ii)}). 
This is the first time that the notion of adaptive regret minimization is used within primal-dual frameworks. Interestingly, this usage is significantly different from its original motivating applications \cite{hazan2007adaptive,adamskiy2016closer,daniely2015strongly}.

We show that the resulting framework provides best-of-both-worlds no-regret guarantees while solving both limitations. 
We prove a tight $\tilde O(T^{1/2})$ regret upper bound in the stochastic setting, and an $\alpha/(\alpha+1)$ constant-factor competitive ratio in the adversarial setting, under the standard assumption that the budget is $\Omega(T)$ \citep{balseiro2022best}. In both settings, our framework guarantees vanishing cumulative ROI constraint violation, and cumulative expenditure less than or equal to the available budget. 
Best-of-both-worlds algorithms for problems with long-term constraints typically require different proof techniques for the two input models. We unify most of the analysis, with the only difference being in the characterization of a particular set of policies.

Finally, we show how our framework can be employed for bidding in any mechanism with finite types. In particular, we show that it can handle the case of repeated \emph{non-truthful} auctions (\eg first-price auctions). Previous work could only handle budget- and ROI-constrained bidders in the simpler case of second-price auctions, in which truthfulness can be exploited \citep{feng2022online,golrezaei2021bidding}.

\subsection{Related works} 

Standard primal-dual approaches for bandit problems with knapsack constraints cannot be applied in our setting, as they require as an input the Slater's parameter $\alpha$ \citep{balseiro2019learning,immorlica2019adversarial,balseiro2022best,Castiglioni2022Online,castiglioni2022unifying}. In these works, the knowledge of $\alpha$ is exploited to ensure that dual variables are ``small'' through an explicit projection step over a set which depends on $\alpha$. This is not possible in our setting, due to the presence of non-packing ROI constraints.
In our new analysis, we show how such frameworks can be suitably adapted to work in more complex scenarios than the standard one. 
The issue of not knowing $\alpha$ has been effectively addressed in stochastic settings \citep{yu2017online,wei2020online,castiglioni2022unifying,lobos2021joint}. Nonetheless, since our goal is to provide guarantees that hold also in the presence of adversarial inputs, such results do not extend to our setting. 

\xhdr{Repeated auctions.} The problem of online bidding under budget constraints has been studied in various settings \citep{balseiro2019learning,ai2022no}. In the context of online allocation problems with an arbitrary number of constraints, \citet{balseiro2020dual,balseiro2022best} propose a class of primal-dual algorithms attaining asymptotically optimal performance in the stochastic and adversarial case. In their setting, at each round, the input $(f_t,c_t)$ is observed by the learner \emph{before} they make a decision. This makes the problem substantially different from ours. In particular, their framework cannot handle \emph{non-truthful} repeated auctions such as first-price auctions.
Recent works have also examined settings similar to ours, involving bidders with constraints on their budget and ROI.
The framework by \citet{feng2022online} can handle both ROI and budget constraints, but crucially relies on truthfulness of second-price auctions, and on the stochasticity of the environment. 
The recent work by \citet{wang2023learning} considers the problem of bidding in repeated first-price auctions under only budget constraints and stationary competition. Their analysis cannot be extended to our setting for the same reasons mentioned at the beginning of the section.

\xhdr{Concurrent work.} 
\citet{slivkins2023contextual} studies a stochastic setting with general long-term constraints similar to \citet{castiglioni2022unifying}. They provide a $\tilde O(T^{1/2})$ guarantee when $\alpha$ is known, and $\tilde O(T^{3/4})$ guarantees when $\alpha$ is not known. 
The latter result cannot be extended to the case of inputs generated by an  adversary.
The recent paper by~\citet{bernasconi2023bandits} studies a different setting from ours, in which they only have budget constraints, and knowledge of $\alpha$ is hindered by the fact that resources can be replenished (\ie costs can be negative, as in \citet{kumar2022non}). They provide best-of-both-world guarantees by exploiting results presented in this paper. 
Further, related works are presented in \cref{sec:apprelated}.

\section{Preliminaries}\label{sec:prel}

At each round $t\in[T]$, the decision maker chooses an action $x_t\in \cX$, where $ \cX$ is non-empty set of available actions, and subsequently observes reward $f_t(x_t)$ with $f_t:\cX\to[0,1]$, and incurs a cost $c_t(x_t)$, with $c_t:\cX\to[0,1]$.
We denote as $\cF$, respectively $\cC$, the set of all the possible functions $f_t$, respectively $c_t$ (\eg $\cF$ and $\cC$ may contain all the Lipschitz-continuous functions defined over $\cX$, or all the convex functions over $\cX$). We assume that functions in $\cF$ and $\cC$ are measurable with respect to probability measures over $\cX$. This ensures that expectations are well-defined, since the functions are assumed to be bounded above, and they are therefore integrable.
Following previous work \citep{agarwal2014budget,Badanidiyuru2018jacm,immorlica2019adversarial}, we assume the existence of a \emph{void action} $\nullx$ such that, for any pair $(f,c)\in\cF\times\cC$, $f(\nullx)=c(\nullx)=0$.
The decision maker has an overall budget $B \in \mathbb{R}_+$, $B=\Omega(T)$, which limits the total expenditure throughout the $T$ rounds. We denote by $\rho>0$ the \emph{per-iteration budget} defined as $B/T$.
Moreover, the decision maker has a target \emph{return-on-investments} (ROI) $\roi>0$. In order to simplify the notation, throughout the paper we will assume $\roi\defeq 1$. This comes without loss of generality: whenever $\roi>1$ we can suitably scale down values of reward functions $f_t$.
Then, the decision maker has the goal of maximizing their cumulative utility $\sum_{t=1}^T f_t(x_t)$, subject to the following constraints:
\begin{OneLiners}
    \item \textbf{Budget constraints}: $\sum_{t=1}^T c_t(x_t)\le \rho T$. Such constraints should be satisfied ``no matter what,'' so we refer to them as \emph{hard} constraints.
    \item \textbf{ROI constraints}: $\sum_{t=1}^T \mleft( c_t(x_t) - f_t(x_t)  \mright) \le 0.$
    We say ROI constraints are \emph{soft} meaning that we allow, in expectation, a small (vanishing in the limit) cumulative violation across the $T$ rounds.
\end{OneLiners}

In the context of repeated ad auctions, as we will discuss in \cref{sec:app}, this model can be easily instantiated to describe \emph{any} mechanism with finite types beyond the well-studied case of second-price auctions.

\xhdr{Auxiliary LP.} We endow $\cX$ with the Lebesgue $\sigma$-algebra, and we denote by $\Pi$ be the set of \emph{randomized policies}, defined as the set of probability measures on the Borel sets of $\cX$. 
At any $t\in [T]$ the decision maker will compute a policy $\vpi_t\in\Pi$ and play an action $x_t\sim\vpi_t$ accordingly.\footnote{The set $\{1,\ldots,n\}$, with $n\in\N$, is compactly denoted as $[n]$, and we let $[0]$ be equal to the empty set. Moreover, given a discrete set $\cX$, we denote by $\Delta_\cX$ the $|\cX|$-simplex.}
Given a reward function $f$ and a cost function $c$, let $g:\Pi\ni\vpi\mapsto\E_{x\sim\vpi}\mleft[ c(x)\mright]-\rho$ be the expected gap between the cost for policy $\vpi$ and the per-iteration budget $\rho$, and $h:\Pi\ni\vpi\mapsto \E_{x\sim\vpi}\mleft[c(x)-f(x)\mright]$ be the expected ROI constraint violation for policy $\vpi$. We will denote by $g_t$, resp. $h_t$, the constraints defined for the pair $(f_t,c_t)$ observed at round $t$.
In order to simplify the notation, given $x\in\cX$, the value of the reward function for the policy that derministically plays action $x$ (\ie the Dirac mass $\delta_x$) will be denoted by $f_t(x)$ in place 
of $f_t(\delta_x)$. Analogously, we will write $c_t(x)$, $g_t(x)$, and $h_t(x)$ instead of using Dirac measures $\delta_x$. 
Let $\distr$ be an arbitrary probability measure over the space of possible inputs $\cF\times\cC$. Then, we define the linear program \ref{eq:opt lp gen} as follows:
\begin{equation}\label{eq:opt lp gen}
	\tag{$\LP_{\distr}$}
	\OPT_{\distr}\defeq\mleft\{\begin{array}{lll}
		\displaystyle
		\sup_{\vpi \in \Pi} & \E_{f\sim\distr}f(\vpi)  \\
		\,\,\text{\normalfont s.t. }& \E_{\distr} g(\vpi)\le 0\\
		& \E_{\distr} h(\vpi)\le 0
	\end{array}\mright..
\end{equation}
\ref{eq:opt lp gen} selects the bidding policy $\vpi$ maximizing the expected reward according to $\distr$, while ensuring that constraints $g$ and $h$ encoded by $\distr$ are satisfied in expectation (both $g$ and $h$ are defined by $(f,c)\sim\distr$).
The \emph{Lagrangian function} $\cL_{\distr}:\Pi \times\R_{\ge 0}^2 \to\R$ of the above LP is defined as 
\[
	\cL_{\distr}(\vpi,\lambda,\mu)\defeq \E_{(f,c) \sim \distr} \mleft[f(\vpi) - \lambda g(\vpi)-\mu h(\vpi)\mright].
\]

\section{Baselines}\label{sec:baselines}

Our goal is to design online algorithms that output a sequence of policies $\vpi_1,\ldots,\vpi_T$ such that i) the \emph{cumulative regret} with respect to the performance of the baseline grows sublinearly in $T$, ii) the budget constraint is (deterministically) satisfied, \ie $\sum_{t=1}^Tc_t(x_t)\le B$, and iii) the \emph{cumulative ROI constraint violation} $\sum_{t=1}^T h_{t}(\vpi_t)$ grows sublinearly in the number of rounds $T$.
The cumulative regret of the algorithm is defined as $R^T \coloneqq T \, \OPT - \sum_{t=1}^T f_t(x_t)$,
where the baseline $\OPT$ depends on how the input sequence $\gamma\defeq(f_t,c_t)_{t=1}^T$ is generated. 
We consider the following two settings for which we define an appropriate value of the baseline, and a suitable problem-specific parameter $\alpha\in\R$ which is related to the feasibility of the offline problem.

\xhdr{Stochastic setting}: at each $t\in[T]$, the pair $(f_t,c_t)$ is independently drawn according to a fixed but unknown distribution $\distr$ over $\cF\times\cC$. The baseline is $\OPT_\distr$, which is the standard baseline for stochastic BwK problems since its value is guaranteed to be closed to that of the best dynamic policy \citep[Lemma 3.1]{Badanidiyuru2018jacm}.
 In this setting, let
	$
		\alpha \defeq -\inf_{\vpi \in \Pi}\max\mleft\{ \E_\distr g(\vpi), \E_\distr h(\vpi) \mright\}.
	$
	
	\xhdr{Adversarial setting}: the sequence of inputs $\gamma$ is selected by an oblivious adversary. Given $\gamma$, we define the following distribution over inputs: for any pair $(f,c)\in\cF\times\cC$, $\bar\gamma[f,c]=\sum_{t=1}^T\indicator{f_t=f, c_t=c}/T$. Then, the baseline is the solution of $\LP_{\bar\gamma}$ (\ie $\OPT_{\bar\gamma}$), which is the standard baseline for the adversarial setting (see, \eg \citet{balseiro2022best,immorlica2019adversarial}). Therefore, the baseline is obtained by solving the offline problem initialized with the average of the realizations observed over the $T$ rounds. Moreover, our results will also hold with respect to the best \emph{unconstrained} policy.
	We define $\alpha$ as 
	$\alpha \defeq -\inf_{\vpi} \max_{t \in [T]}\max\mleft\{  g_t(\vpi), h_t(\vpi) \mright\}.$
	In this setting, $\alpha$ represents the ``worst-case feasiblity'' with respect to functions observed up to $T$. 

    We remark that the parameter $\alpha$ measures the worst case feasibility of the problem by considering both budget and ROI constraints. In absence of the latter constraints, $\alpha$ would coincide with $\rho$. 
	We start by developing our analysis under the following standard assumption.

\begin{assumption}\label{assumption adv}
	In the adversarial (resp., stochastic) setting, $\gamma$ (resp., $\distr$) is such that $\alpha>0$.
\end{assumption}

This means that $\LP_\distr$ and $\LP_{\bar\gamma}$ satisfy (stochastic) Slater's condition. In particular, in the adversarial setting we are requiring the existence of a randomized ``safe'' policy that, in expectation, strictly satisfies the constraints for each $t$. This is a frequent assumption in works focusing on settings similar to ours (see, \eg \citep{chen2017online,neely2017online,yi2020distributed,castiglioni2022unifying}). In \Cref{sec:generalization} we show how this requirement can be relaxed.

When studying primal-dual algorithms, a key implication of Slater's condition is the existence and boundedness of Lagrange multipliers (see, \eg \citet{nedic2009approximate}). 
Therefore, when $\alpha>0$ is known, the set of dual multipliers can be safely bounded by requiring the $\ell_1$-norm of the multiplier to be less than or equal to $1/\alpha$ (see, \eg \citet{balseiro2022best}). 
This is the case, for example, for problems with only budget constraints, in which $\alpha=\rho>0$, which is achieved by bidding the void action $\nullx$ at each round. 
However, ROI constraints complicate the problem as the decision maker does \emph{not} know $\alpha$ beforehand. %

\section{Adaptivity in Primal-Dual Frameworks}\label{sec:primal dual}

In this section, we first provide a concise overview of a generic primal-dual template that adheres to the structure presented by \citet{immorlica2019adversarial,Castiglioni2022Online}.
Then, we provide a simple example demonstrating that a direct application of such framework would result in violations of the constraints which are linear in $T$. 
Finally, we describe the modifications needed to update the standard primal-dual template in order to achieve the desired behavior, and we show that online gradient descent already meets the new criteria for the dual regret minimizer.

\subsection{A Standard Primal-Dual Template}\label{sec:standard primal dual}

\cref{alg:meta alg} summarizes the structure of a standard primal-dual framework. It assumes access to two regret minimizers with the following characteristics. The first one is the \emph{primal regret minimizer} $\cRp$. It outputs policies in $\Pi$, and receives as feedback the loss $\lossp:\cX\to\R_{\ge0}$ such that, for any $x_t$ sampled according to policy $\vpi_t$, 
\begin{equation*}
    \lossp(x_t)= -f_t(x_t) + \lambda_t (g_t(x_t)+1)+ \mu_t (h_t(x_t)+1)+1.
\end{equation*}
The primal loss function is obtained from the Lagrangian relaxation of the problem at time $t$, plus the additive term $1+\lambda_t+\mu_t$ to ensure $\lossp(\cdot)\in\R_{\ge0}$. For ease of presentation, whenever we write $\lossp(\vpi)$ we mean $\lossp(\vpi)=\E_{x\sim\vpi}\lossp(x)$.
The second regret minimizer is the \emph{dual regret minimizer} $\cRd$. It outputs points in the space of dual variables $\cD_\alpha\defeq \{\vy\in\R^2_{\ge0}:\|\vy\|_1\le1/\alpha\}$, and observes the linear utility 
\begin{equation*}
   \lossd:\cD_{\alpha}\ni(\lambda,\mu) \mapsto \lambda g_t(x_t)+ \mu h_t(x_t)\in\R.
\end{equation*}
The dual regret minimizer has full feedback by construction. 

For each $t$, the algorithm first computes primal and dual actions. %
The action at time $t$ is $x_t\sim\vpi_t$ unless the available budget $B_t$ is less than 1, in which case we set $x_t$ equal to the void action $\nullx$. Then, $\lossp(x_t)$ and $\lossd$ are observed, and the budget consumption is updated according to the realized cost $c_t$. 
Finally, the internal state of the two regret minimizers is updated on the basis of the feedback specified by $\lossp,\lossd$. 
We will denote by $\tau\in[T]$ the time in which the budget is fully depleted and the decision maker starts playing the void action $\nullx$.

The traditional requirement on the primal and dual regret minimizers is that their cumulative regret should grow sublinearly in time (see, \eg \citet{Castiglioni2022Online}).\footnote{In general, the cumulative regret for losses $\ell_t$ is defined as $ \inf_{x \in\cX}\sum_{t=1}^T\mleft(\ell_{t}(x_t)-\ell_t(x)\mright)$ \citet{cesa2006prediction}.}
Although sufficient for handling simpler problems, the following example shows that this requirement is not enough to provide guarantees in our setting.

\begin{algorithm}[tb]
    \caption{Primal-dual framework.}
    \label{alg:meta alg}
 \begin{algorithmic}
    \STATE {\bfseries Input:} parameters $B,T,\delta$; regret minimizers $\cRp$, $\cRd$
    \STATE {\bfseries Initialization:} $ B_1\gets B$; initialize $\cRp,\cRd$
    \FOR{$t = 1, 2, \ldots , T$}
    \STATE {\bfseries Dual decision:} $(\lambda_t,\mu_t) \gets\textnormal{ output of }\cRd$
    \STATE{\bfseries Primal decision:} $\Pi\ni \pi_t\gets \textnormal{ output of }\cRp$

    \STATE {\bfseries Select action} as
    \vspace{-.3cm}
    \[
    x_t\gets \mleft\{\hspace{-1.25mm}\begin{array}{l}
            \displaystyle
            x_t \sim \vpi_t \hspace{.5cm}\text{\normalfont if } B_{t} \ge 1\\ [2mm]
            \nullx \hspace{.5cm}\text{otherwise}
        \end{array}\mright..
    \]
    \vspace{-.3cm}

    \STATE {\bfseries Observe:} observe $f_t(x_t)$ and $c_t(x_t)$, and update available resources: $B_{t+1}\gets B_{t} - c_t(x_t)$

    \STATE {\bfseries Primal update:} update $\cRp$ using $\lossp(x_t)$

    \STATE {\bfseries Dual update:} update $\cRd$ using $\lossd(\cdot)$
    \ENDFOR
 \end{algorithmic}
 \end{algorithm}

 \subsection{When Standard Primal-Dual Algorithms Fail}\label{sec:example}

We present a simple example in which the direct application of \Cref{alg:primal regret minimizer} with the standard requirements presented in \Cref{sec:standard primal dual} does not yield the desired behavior, even if the learner knows $\alpha$ a priori. We observe that this rule out the direct application of known primal-templates for adversarial inputs such as those by \citet{balseiro2022best,immorlica2019adversarial,Castiglioni2022Online}.

Suppose the learner is participating in a sequence of \emph{generalized first price auctions} (GFP) under an ROI constraint.
In this setting, there are multiple ad slots that have to be allocated at each $t$. The bidder with the $i$-th highest bid is allocated the $i$-th slot and, upon winning a slot, their payment is equal to their bid amount. 
\begin{table}[!t]
    \centering
    \begin{tabular}{c|c|c|c||cc}
      & Slot \circled{1}  & Slot \circled{2} & Slot \circled{3} & $b_t$ & $\mu_t$\\\toprule
     $v$ & 1 & $1/2$ & $1/16$ & &\\\cmidrule{1-4}
    $\cI_1$ & \cellcolor{gray!20} $1/2$  & \cellcolor{gray!20} $1/2$ & \cellcolor{gray!20} 0 & $1/2$ & 0\\ 
    $\cI_2$ & \cellcolor{gray!20} $1$  & \cellcolor{gray!20} $1/2$ & \cellcolor{gray!20} 0 & 0 & 0 \\
    $\cI_3$ & \cellcolor{gray!20} $1$  & \cellcolor{gray!20} $1/2$ & \cellcolor{gray!20} 0 & 1 & 16 \\
    \bottomrule
    \end{tabular}
    \caption{Setup of the repeated GFP auctions.}
    \label{tab:example}
    \vspace{-4mm}
\end{table}
\Cref{tab:example} provides a summary of the instance being considered. 
There are three ad slots \circled{1}, \circled{2}, and \circled{3} at each $t$. The valuation $v_i$ for each slot $i$ is fixed across the entire time horizon. Let $t_1,t_2\in [T]$, $t_1\le t_2$. We denote by $\cI\defeq[t_1,t_2]$ the set $\{t_1,t_1+1,\ldots,t_2\}$, and we call $\cI$ the \emph{time interval} starting from round $t_1$ to round $t_2$. We consider three time intervals, denoted by $\cI_1=[t_1]$, $\cI_2=(t_1,t_2]$, $\cI_3=(t_2,T]$ for some $t_1,t_2$. The cells highlighted in gray provide the highest competing bid for the different slots, in the three different time intervals. As an illustration, within interval $\cI_2$ the learner has to bid $1/2\le b_t <1$ to win \circled{2}. Within $\cI_1$, when the learner bids $b_t\ge 1/2$ they win \circled{1}. The learner has quasi-linear utilities: their utility for winning slot $i$ at time $t$ is $v_i-b_t$. The learner has a ROI constraint with target ROI 1.

The last two columns of \Cref{tab:example} describe a possible sequence of primal actions $b_t$ and dual actions $\mu_t$, which are constant within each interval. We observe that the dual variable is always at most $1/\alpha=16$. It is possible to show that, by setting the length of intervals so that $\li{1}=39\li{3}$ and $\li{2}\ge 12\li{3}$, the primal and dual regret over $T$ are $\le 0$, thereby matching the standard requirements on regret as per \Cref{sec:standard primal dual} (calculations are provided in \Cref{app:example}). However, the cumulative constraint violations is
\[
\sum_{t=1}^T\mleft(c_t(b_t)-f_t(b_t)\mright)=  \sum_{t=1}^T \mleft(2b_t-v_t\mright)=\Omega(T). 
\]
Therefore, the standard requirements for primal and dual regret minimizers are insufficient to ensure sublinear regret and constraint violations. The crucial problem here is that, since the primal player attains negative regret in $\cI_1$ and $\cI_2$, then it can afford to make decisions that significantly violate the ROI constraint in $\cI_3$. 
We observe that frameworks employing a recovery phase, such as the one by \citet{castiglioni2022unifying}, are not viable for our stated goals, since they rely on knowledge of $\alpha$ to switch between phases.

\subsection{New Requirements: Weak Adaptivity}

Unlike previous work, we require $\cRp$ and $\cRd$ to be \emph{weakly adaptive}, that is, they should guarantee sublinear \emph{adaptive} (\emph{a.k.a. interval}) regret (see, \eg \citep{hazan2007adaptive,luo2018efficient}). This notion of regret is stronger than ``standard'' external regret, and it will be essential in our analysis. 
The primal regret minimizer must be such that, for $\delta\in(0,1]$, with probability at least $1-\delta$ it holds that, for any $\vpi\in\Pi$ and for any interval $\cI$,
\begin{equation}\label{eq:interval regret} \sum_{t \in \cI}\mleft( \lossp(x_t)-\lossp(\vpi) \mright)\le M_{\cI}^2 \cump[T,\delta],
\end{equation}
where $M_{\cI}$ is the maximum absolute value of the losses $\lossp$ observed in interval $\cI$, and $\cump[T,\delta][\term{P}]$ is a term of order $\tilde O(\sqrt{T})$.
We require a similar property for the dual regret minimizer. However, since the dual regret minimizer works under full-information feedback by construction, we can use a regret minimizer with deterministic regret guarantees. In particular, $\cRd$ should guarantee that, for any time interval $\cI=[t_1,t_2]$, and for any pair of dual variables $(\lambda,\mu) \in \mathbb{R}_{\ge0}^2 $ it holds
\begin{equation*}\label{eq:interval dual}
\sum_{t \in \cI}\mleft(\lossd(\lambda,\mu) - \lossd(\lambda_t,\mu_t) \mright) \le
\nu(T)(\mu-\mu_{t_1})^2 + \cumdb+\cumdr,
\end{equation*}
where $\nu(T)\ge0$ is such that $\nu(T)=o(T)$, and $\cumdb$ (resp., $\cumdr$) is a term sublinear in $T$ related to the budget (resp., ROI) constraint.

The choice of the primal regret minimizer is primarily influenced by the specific problem being considered. On the other hand, the dual problem remains constant, and thus we proceed by presenting an appropriate dual regret minimizer. \Cref{sec:app} will provide some examples of primal regret minimizers satisfying \Cref{eq:interval regret} for relevant applications.

\subsection{A Weakly Adaptive Dual Regret Minimizer}\label{sec:dual rm}

As a dual regret minimizer we employ the standard online gradient descent algorithm (OGD) \citep{zinkevich2003online} on each of the two Lagrangian multipliers $\lambda$ and $\mu$. We initialize the algorithm by letting $\mu_1=\lambda_1=0$. We employ two separate learning rates $\etab$ and $\etar$, which will be specified in \cref{lm:dual rm}. At each round $t\in [T]$, the dual regret minimizer updates the Lagrangian multipliers as $\lambda_{t+1} \gets  P_{[0,1/\rho]}\mleft(\lambda_{t}+ \etab g_t(x_t)\mright)$, and $\mu_{t+1}\gets P_{\R_+}\mleft[\mu_{t}+  \etar h_t(x_t)\mright]$, where $P$ denotes the projection operator. The former update performs a gradient step and then projects the result on the interval $[0,1/\rho]$. This is possible because we know that playing the void action $\nullx$ would satisfy the budget constraints by a margin of at least $\rho$, and therefore we can safely consider as the set of $\lambda$ the interval $[0,1/\rho]$ \citep{Castiglioni2022Online}. On the other hand, in the update of $\mu$ we perform a grandient step and then ensure that the value is in $\R_+$. Since the decision maker does not know the feasibility parameter of ROI constraints, bounding $\mu$ becomes more complex, and we show how to approach this problem in \Cref{sec: bound multipliers}. 

Given a time interval $\cI=[t_1,t_2]$, and $\delta\in[0,1]$, we let
\begin{equation*}\label{eq:azuma term}
   \cume\defeq\mleft\{\begin{array}{lll}
       \displaystyle
       2\sqrt{(t_2-t_1)\log\mleft(2T/\delta\mright)} & \text{\normalfont if } \delta\in(0,1]  \\
       0 &  \text{\normalfont if } \delta=0\\
   \end{array}\mright.,
\end{equation*}
and,  when clear from context, we drop the dependency on $\cI$ to denote $\cume[T,\delta][[T]]$.
Let $\cumdb$ be a term of order $O(T^{1/2}/\rho)$, and $\cumdr$ be a term of order $O(T^{1/2})$.
The regret guarantees of the dual regret minimizer follow from standard results on OGD (see \citet[Chapter 10]{hazan2016introduction}).

\begin{lemma}\label{lm:dual rm}
    Let $\lambda_1=\mu_1=0$. 
    Then, OGD guarantees that, for any interval $\cI=[t_1,t_2]$, it holds 
    \begin{OneLiners}
    \item $\sum_{t \in \cI} \mu_t h_t(x_t)\le (\mu-\mu_{t_1})^2/\etar+\cumdr$ for  $\mu \in \R_+$,
    \item $\sum_{t \in \cI} \lambda_t g_t(x_t)\le \cumdb$ for all $\lambda \in R_+$,
    \end{OneLiners}
    where learning rates are set as follows: $\etab~\defeq~1/\rho T^{1/2}$, and $\etar~\defeq~1/\mleft(6+T^{1/2}+\cumdb+6\cume+16\cump[T,\delta]\mright)$.
\end{lemma}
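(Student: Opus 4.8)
The plan is to read both inequalities as two independent instances of the classical projected online gradient (ascent) analysis, one per dual coordinate; this decoupling is legitimate because $\lossd(\lambda,\mu)=\lambda g_t(x_t)+\mu h_t(x_t)$ is separable and the two multipliers evolve under separate projected gradient steps with gradients $g_t(x_t)$ and $h_t(x_t)$, both bounded in absolute value by $1$. For a generic coordinate with update $z_{t+1}=P_K(z_t+\eta d_t)$, I would start from nonexpansiveness of the Euclidean projection onto the convex set $K$: for every comparator $z\in K$ one has $(z_{t+1}-z)^2\le(z_t+\eta d_t-z)^2$. Expanding the right-hand side and rearranging yields the per-step inequality $d_t(z_t-z)\ge\frac{(z_{t+1}-z)^2-(z_t-z)^2}{2\eta}-\frac{\eta}{2}d_t^2$, together with the comparator-free variant $z_t d_t\le\frac{z_{t+1}^2-z_t^2}{2\eta}$ obtained by exploiting that $0\in K$.

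Next I would telescope the per-step bound over $t\in\cI=[t_1,t_2]$. The essential feature for the \emph{interval} (weakly adaptive) guarantee is that the telescoped sum collapses to the endpoints of $\cI$, i.e.\ it is anchored at the multiplier value at the \emph{start} $t_1$ of the interval rather than at the start of the whole horizon; discarding the final nonnegative square $(z_{t_2+1}-z)^2\ge 0$ then leaves a comparator term of the exact form $(z-z_{t_1})^2/(2\eta)$. This is precisely what produces the $(\mu-\mu_{t_1})^2$ dependence demanded by \eqref{eq:interval dual}, and it is the only point at which the weak-adaptivity of the dual minimizer enters.

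The two items then differ only in how this endpoint term is controlled, according to the geometry of the feasible set. For the $\lambda$ coordinate the projection set is the \emph{bounded} interval $[0,1/\rho]$, so $\lambda_t\le 1/\rho$ throughout and the telescoped endpoint term can be replaced by its maximum $\tfrac{(1/\rho)^2}{2\etab}$; combined with $|g_t(x_t)|\le 1$ and the residual $\tfrac{\etab}{2}\sum_{t\in\cI}g_t(x_t)^2\le\tfrac{\etab}{2}T$, plugging in $\etab=1/(\rho T^{1/2})$ gives a bound of order $O(T^{1/2}/\rho)=\cumdb$, uniformly in the (here irrelevant) comparator $\lambda\in\R_+$. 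For the $\mu$ coordinate the set $\R_+$ is \emph{unbounded}, so no a priori cap on $\mu_t$ is available at this stage; I would therefore retain the comparator term $(\mu-\mu_{t_1})^2/\etar$ and only bound the residual curvature term $\tfrac{\etar}{2}\sum_{t\in\cI}h_t(x_t)^2\le\tfrac{\etar}{2}T$ using $|h_t(x_t)|\le 1$. It then remains to check that the prescribed $\etar$ keeps this residual of order $O(T^{1/2})$: since $1/\etar\ge T^{1/2}$ by definition, we have $\etar\le T^{-1/2}$ and hence $\tfrac{\etar}{2}T\le\tfrac12 T^{1/2}$, which is absorbed into $\cumdr$.

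The main obstacle—and the place where this lemma departs from a textbook bound—is the unbounded dual domain for $\mu$. Because $\alpha$, and hence any safe cap on the multiplier, is unknown, I cannot collapse the comparator term into a constant as I do for $\lambda$, and must instead carry the quadratic $(\mu-\mu_{t_1})^2/\etar$ forward so that it can be cancelled against the primal side in the global analysis; this deferral is exactly the mechanism by which the framework avoids needing $\alpha$. A secondary care point is that the precise expression for $\etar$ (involving $\cumdb$, $\cume$, and $\cump[T,\delta]$) is dictated by the downstream primal--dual balancing rather than by this lemma, so here I would verify only the single property actually used, namely $1/\etar=\Theta(T^{1/2})$, which guarantees $\cumdr=O(T^{1/2})$ while leaving the coefficient $1/\etar$ of the comparator term free for the later balancing argument.
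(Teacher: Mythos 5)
Your derivation is the standard projected-OGD analysis that the paper itself relies on: it gives no proof of this lemma and simply defers to ``standard results on OGD,'' so your route (separability of $\lossd$ into two one-dimensional instances, nonexpansiveness of the projection, telescoping anchored at $t_1$ to get the $(\mu-\mu_{t_1})^2$ dependence, and the bounded-vs-unbounded treatment of the two domains) is exactly the intended argument, and the step-size checks you perform are the right ones.

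Two points deserve care. First, what your argument actually establishes is the \emph{regret} form, $\sum_{t\in\cI}(\mu-\mu_t)h_t(x_t)\le(\mu-\mu_{t_1})^2/(2\etar)+\cumdr$ and $\sum_{t\in\cI}(\lambda-\lambda_t)g_t(x_t)\le\cumdb$, whereas the lemma as printed bounds $\sum_{t\in\cI}\mu_t h_t(x_t)$ and $\sum_{t\in\cI}\lambda_t g_t(x_t)$ alone, with the comparator appearing only on the right-hand side. Taken literally, instantiating the first item at $\mu=\mu_{t_1}$ would assert $\sum_{t\in\cI}\mu_t h_t(x_t)\le\cumdr$, which is false in general: with $h_t(x_t)\equiv 1$ the iterates satisfy $\mu_t=(t-1)\etar$, so the sum is $\Theta(\etar T^2)=\Theta(T^{3/2})\gg\cumdr$. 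The statement must therefore be read as shorthand for the regret form, which is also what \cref{eq:interval dual} and every downstream invocation (comparators $\lambda=1/\rho$ in \cref{lm:highG}, $\mu=0$ in \cref{lm:regret unified}, $\lambda=0$ in \cref{lm:boundLagrangian}) actually use; you prove the right content, but you should make this reconciliation explicit rather than leaving the comparator ``irrelevant.'' Second, the ``comparator-free variant'' $z_td_t\le(z_{t+1}^2-z_t^2)/(2\eta)$ is not a consequence of $0\in K$ alone: it holds for projection onto $\R_+$ (truncation only from below), but fails for $K=[0,1/\rho]$ at steps clipped at the upper endpoint (take $z_t=1/\rho$ and $d_t>0$: the left side is positive while the right side is zero). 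You do not use this variant in the main bounds, so nothing breaks, but it should not be stated as generally valid.
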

The dependency on $\delta$ in the construction of $\etar$ is resolved by \cref{alg:meta alg} taking $\delta$ as an input parameter, and the final guarantees of the framework are parametrized on $\delta$.
Next, we prove the following simple result characterizing the growth of the $\mu$ variables, which will be useful in the remainder of the paper (omitted proofs can be found in the appendix).
\begin{restatable}{lemma}{growthMu} \label{lm:ogd}
    For all $t_1,t_2\in [T]$, it holds  \[\mu_{t_2}\ge \etar \sum_{t'\in[t_1,t_2-1]} h_{t'}(x_{t'})+\mu_{t_1}.\]
\end{restatable}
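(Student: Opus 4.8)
The plan is to prove \Cref{lm:ogd} directly by unrolling the online gradient descent recursion for the multiplier $\mu$ and exploiting the fact that the projection onto $\R_+$ can only increase the iterate. Recall that the dual regret minimizer updates $\mu$ via $\mu_{t+1}\gets P_{\R_+}\mleft[\mu_{t}+\etar h_t(x_t)\mright]$. The central observation is that projecting onto $\R_+$ is the operation $z\mapsto\max\{z,0\}$, so for every round $t$ we have the pointwise inequality $\mu_{t+1}=\max\{\mu_t+\etar h_t(x_t),0\}\ge \mu_t+\etar h_t(x_t)$. This single-step bound is the entire substance of the lemma; everything else is summation.

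First I would fix arbitrary $t_1,t_2\in[T]$ with $t_1\le t_2$ and write the telescoping chain. Applying the single-step inequality repeatedly from round $t_1$ up to round $t_2-1$ gives
\[
\mu_{t_2}\ge \mu_{t_2-1}+\etar h_{t_2-1}(x_{t_2-1})\ge \mu_{t_2-2}+\etar\mleft(h_{t_2-1}(x_{t_2-1})+h_{t_2-2}(x_{t_2-2})\mright)\ge\cdots\ge \mu_{t_1}+\etar\sum_{t'\in[t_1,t_2-1]}h_{t'}(x_{t'}),
\]
which is exactly the claimed bound. Formally this is an induction on $t_2-t_1$: the base case $t_2=t_1$ is trivial since the sum over the empty set $[t_1,t_1-1]$ is zero, and the inductive step adds one application of the step inequality $\mu_{t_2}\ge\mu_{t_2-1}+\etar h_{t_2-1}(x_{t_2-1})$ to the inductive hypothesis applied to the interval $[t_1,t_2-1]$.

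There is no real obstacle here; the proof is a short induction, and the only point requiring care is being explicit that the inequality $P_{\R_+}[z]\ge z$ holds for all $z\in\R$ (including when the gradient step drives the argument negative, in which case the projection clamps it to $0$, which is still at least the negative pre-projection value). The fact that $\etar>0$ is needed only implicitly in that we never reverse an inequality; since the learning rate from \Cref{lm:dual rm} is a reciprocal of a sum of nonnegative quantities it is indeed strictly positive, so no sign issues arise. I would note that the bound is one-sided and does not require any assumption on the signs of the $h_{t'}(x_{t'})$ terms, which is precisely why it will be useful later: it lower-bounds the multiplier $\mu_{t_2}$ by the accumulated ROI violation over any interval regardless of the behavior of the primal player.
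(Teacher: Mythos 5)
Your proof is correct and follows essentially the same route as the paper's: an induction on the interval length whose only substantive ingredient is the single-step bound $\mu_{t+1}=P_{\R_+}[\mu_t+\etar h_t(x_t)]\ge \mu_t+\etar h_t(x_t)$, with the base case $t_2=t_1$ handled by the empty sum. No gaps.
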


\section{Bounding the Lagrange Multipliers}\label{sec: bound multipliers}

Previous work usually assumes knowledge, either exactly or via some upper bound, of the Slater's parameter $\alpha$ \citep{balseiro2020dual,Castiglioni2022Online}. This information is then used to bound the magnitude of dual multipliers, which is fundamental in order to obtain meaningful primal regret upper bounds since the magnitude of $\lossp$ depends on dual multipliers.
In our setting, the decision maker has no knowledge of the gap that the strictly feasible solution guarantees for the ROI constraint, which renders the traditional approach to bound $\mu_t$ not viable. 
We show that, even without a priori information on $\alpha$, the primal-dual framework endowed with weakly adaptive regret minimizers guarantees that, with high probability, the Lagrange multiplier $\mu_t$ is bounded by $2/\alpha$ throughout the entire time horizon. 
We start by providing a general condition that we will prove to be satisfied both in the stochastic and adversarial setting.
\begin{definition}[$\delta$-safe policy]\label{def:safe policy}
Given $\delta\in (0,1]$, a policy $\safep$ is $\delta$-\emph{safe} if, for any interval $\cI\defeq [t_1,t_2]$, with $t_1,t_2\in [T]$, $t_1<t_2$, it holds
 \[
    \sum_{t \in \cI} \lambda_t g_t(\safep) + \mu_t h_t(\safep) \le \mleft(\mu_{\cI}+\nicefrac{1}{\alpha}\mright) \cume - \alpha \sum_{t \in \cI} \mu_t,
\]
where $\mu_\cI$ is the largest multiplier in the interval $\cI$. %
\end{definition}
A safe policy gives to the primal regret minimizer a way to limit the realized penalties imposed by the dual regret minimizer. In particular, we can show that if the dual regret minimizer increased the value of Lagrange multipliers $\mu_t$ too much, then the primal regret minimizer could ``fight back'' by playing the safe policy $\safep$, thereby preventing the dual player from being no-regret.
Indeed, the next result shows that whenever there exists a safe policy the Lagrange multipliers must be bounded. 

\begin{restatable}{theorem}{lemmaBoundLagrangian}\label{lm:boundLagrangian}
    If there exists a safe policy and the primal regret minimizer has regret at most $M_{\cI}^2 {\cump[T,\delta]}$ for any time interval $\cI$, then the Lagrange multipliers $\mu_t$ are such that $\mu_t \le 2/\alpha$ for each $t\in [\tau]$.
\end{restatable}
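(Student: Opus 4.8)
\emph{Proof idea.} The plan is to argue by contradiction, exploiting the weak adaptivity of both regret minimizers to localize the analysis on a short ``critical'' interval on which $\mu_t$ is large. Suppose the claim fails, and let $t_2\le\tau$ be the first round with $\mu_{t_2}>2/\alpha$. Since $\mu_1=0$ and a single OGD step moves $\mu$ by at most $\etar$, I would let $t_1<t_2$ be the last round with $\mu_{t_1}\le 1/\alpha$ and set $\cI\defeq[t_1,t_2]$; then $\mu_t>1/\alpha$ for every $t\in\cI$ with $t>t_1$, while $\mu_{t_1}\ge 1/\alpha-\etar$ and, crucially, $\mu_t\le 2/\alpha+\etar$ throughout $\cI$ (the last bound uses that $t_2$ is the \emph{first} violation). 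In particular every multiplier on $\cI$ is bounded, so $M_\cI$, the largest $|\lossp|$ on $\cI$, is $O(1/\rho+1/\alpha)$ and independent of $T$.

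First I would derive an \emph{upper} bound on the realized Lagrangian $\sum_{t\in\cI}\big(\lambda_t g_t(x_t)+\mu_t h_t(x_t)\big)$. Applying the primal interval-regret guarantee \Cref{eq:interval regret} with comparator $\safep$ and noting that the action-independent term $1+\lambda_t+\mu_t$ cancels, then bounding $f_t(x_t)-f_t(\safep)\le 1$ and invoking \Cref{def:safe policy}, I obtain
\[
\sum_{t\in\cI}\big(\lambda_t g_t(x_t)+\mu_t h_t(x_t)\big)\le |\cI|+\big(\mu_\cI+\tfrac1\alpha\big)\cume-\alpha\textstyle\sum_{t\in\cI}\mu_t+M_\cI^2\cump[T,\delta].
\]
Because $\mu_t>1/\alpha$ on $\cI$ for $t>t_1$, the term $-\alpha\sum_{t\in\cI}\mu_t\le-|\cI|+1$ exactly absorbs the reward contribution $|\cI|$, leaving $\sum_{t\in\cI}(\lambda_t g_t(x_t)+\mu_t h_t(x_t))\le 1+(\mu_\cI+\tfrac1\alpha)\cume+M_\cI^2\cump[T,\delta]$, which is $\tilde O(T^{1/2})$ with a constant governed by $1/\alpha$. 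This is the step where the safe policy does the work: when $\mu_t$ is large, playing $\safep$ would force the dual losses to be very negative, so any no-interval-regret primal player must keep the realized Lagrangian small.

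Next I would produce a matching \emph{lower} bound that is incompatible with the above once $\mu$ has grown by more than $1/\alpha$ on $\cI$. On $\cI$ the projection onto $\R_+$ in the $\mu$-update is inactive (the pre-projection value exceeds $1/\alpha-\etar>0$), so the dynamics are exactly $\mu_{t+1}=\mu_t+\etar h_t(x_t)$ and hence $\sum_{t\in\cI}h_t(x_t)=(\mu_{t_2}-\mu_{t_1})/\etar>1/(\alpha\etar)$. Evaluating the dual interval-regret guarantee \Cref{eq:interval dual} at the comparator $(\lambda,\mu)=(0,\tfrac12(\mu_{t_1}+\mu_{t_2}))$ --- the midpoint that optimizes the quadratic trade-off between comparator utility and the $\nu(T)(\mu-\mu_{t_1})^2$ penalty --- yields
\[
\sum_{t\in\cI}\big(\lambda_t g_t(x_t)+\mu_t h_t(x_t)\big)\ge \frac{(\mu_{t_2}-\mu_{t_1})^2}{4\etar}-\cumdb-\cumdr\ge \frac{1}{4\etar\alpha^2}-\cumdb-\cumdr.
\]
Combining the two bounds gives $\tfrac{1}{4\etar\alpha^2}\le 1+(\mu_\cI+\tfrac1\alpha)\cume+M_\cI^2\cump[T,\delta]+\cumdb+\cumdr$, i.e.\ an upper bound on $1/\etar$. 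Substituting the prescribed $\etar=1/(6+T^{1/2}+\cumdb+6\cume+16\cump[T,\delta])$ from \Cref{lm:dual rm} makes the left-hand side grow like $\tfrac{1}{\alpha^2}\big(T^{1/2}+\cumdb+\cume+\cump[T,\delta]\big)$, which strictly exceeds the right-hand side; this contradiction proves $\mu_t\le 2/\alpha$ for all $t\in[\tau]$.

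The main obstacle is this last, quantitative step: the learning rate $\etar$ must be calibrated so that the $\mu$-growth lower bound $\Theta(1/(\etar\alpha^2))$ dominates \emph{every} sublinear term appearing in the primal and dual interval-regret bounds and in the safe-policy inequality, including the $\alpha$-dependent factors hidden in $\mu_\cI$ and in $M_\cI^2$. This is precisely why the denominator of $\etar$ in \Cref{lm:dual rm} aggregates all of $T^{1/2}$, $\cumdb$, $\cume$ and $\cump[T,\delta]$ with carefully chosen constants, and verifying the inequality closes is where the bookkeeping concentrates. Conceptually, the other essential ingredient is weak adaptivity: ordinary external regret would only control the full horizon $[1,T]$, whereas here both the primal and the dual bounds must be applied on the data-dependent critical interval $\cI$ on which $\mu_t$ exceeds $1/\alpha$.
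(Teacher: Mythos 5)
Your proof follows essentially the same route as the paper's: a contradiction on the first interval $[t_1,t_2]$ over which $\mu$ climbs from $1/\alpha$ to $2/\alpha$, an upper bound on the realized Lagrangian penalty from the primal interval-regret guarantee played against the safe policy, a matching lower bound from the fact that the $\mu$-update performs no projection on that interval (so $\sum_{t}h_t(x_t)=(\mu_{t_2}-\mu_{t_1})/\etar\ge 1/(\alpha\etar)$ up to an additive constant), and a final contradiction with the prescribed $\etar$. Two concrete points need attention before the argument actually closes. First, your lower bound $\frac{(\mu_{t_2}-\mu_{t_1})^2}{4\etar}\ge\frac{1}{4\etar\alpha^2}$, obtained from the midpoint comparator, is a factor of $4$ weaker than what the paper extracts (namely $\frac{1}{\alpha^2\etar}$, via $\mu_t\ge 1/\alpha$ throughout the interval); with the stated $\etar$ from \cref{lm:dual rm}, the resulting term $\frac{16}{4\alpha^2}\cump[T,\delta]$ on your left-hand side does not dominate the $M_\cI^2\,\cump[T,\delta]\approx\frac{16}{\alpha^2}\cump[T,\delta]$ term on the right, so the contradiction does not close as written. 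This is repairable within your own calculation by keeping the cross term $\mu_{t_1}(\mu_{t_2}-\mu_{t_1})/\etar\ge(1/\alpha-\etar)(1/\alpha)/\etar$ rather than discarding it via $\mu_{t_1}\ge 0$. Second, the paper first disposes of the regime $\alpha\le 10/\sqrt{T}$ separately (there $\mu_t\le\etar T<2/\alpha$ trivially); this case split is what guarantees $\eta M_{[T]}\le 1$ inside the bootstrap, so that the primal interval-regret bound of \cref{thm:adaptive regret primal} is actually available on $\cI$. Your argument implicitly treats the interval-regret guarantee as unconditional, which is acceptable under the theorem's literal hypothesis but leaves a gap for the instantiation the paper intends.
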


Then, we show that both in the stochastic and in the adversarial setting there exists a safe policy w.h.p., which implies that w.h.p. the Lagrange multipliers are bounded. 

\begin{restatable}{lemma}{LemmaSafeStoc}\label{lm:safeStoc}
If inputs $(f_t,c_t)$ are drawn i.i.d. from $\distr$, and there exists a policy $\vpi$ such that $\E_\distr g(\vpi)\le -\alpha$ and $\E_\distr h(\vpi) \le - \alpha$, then there exists a $\delta$-safe policy with probability at least $1-\delta$.
\end{restatable}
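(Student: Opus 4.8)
The plan is to show that the very policy $\safep=\vpi$ furnished by the hypothesis is $\delta$-safe with probability at least $1-\delta$. Fix an interval $\cI=[t_1,t_2]$ and let $\mathcal H_{t-1}$ denote the $\sigma$-algebra generated by the history up to the end of round $t-1$. Since $(\lambda_t,\mu_t)$ is produced by $\cRd.\nextelement$ before $(f_t,c_t)$ is drawn, it is $\mathcal H_{t-1}$-measurable, whereas $(f_t,c_t)\sim\distr$ is independent of $\mathcal H_{t-1}$. Writing $D_t\defeq\lambda_t g_t(\safep)+\mu_t h_t(\safep)$ and using $\lambda_t,\mu_t\ge 0$ together with the hypothesis $\E_\distr g(\safep)\le-\alpha$ and $\E_\distr h(\safep)\le-\alpha$, I would first compute $\E[D_t\mid\mathcal H_{t-1}]=\lambda_t\,\E_\distr g(\safep)+\mu_t\,\E_\distr h(\safep)\le-\alpha(\lambda_t+\mu_t)\le-\alpha\mu_t$. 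Summing over $\cI$ then shows it suffices to bound the martingale remainder $\sum_{t\in\cI}\bigl(D_t-\E[D_t\mid\mathcal H_{t-1}]\bigr)\le(\mu_\cI+\nicefrac1\alpha)\cume$, since this gives $\sum_{t\in\cI}D_t\le-\alpha\sum_{t\in\cI}\mu_t+(\mu_\cI+\nicefrac1\alpha)\cume$, which is exactly the inequality in \Cref{def:safe policy}.

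I would split this remainder into a budget part $\sum_{t\in\cI}\lambda_t\bigl(g_t(\safep)-\E_\distr g(\safep)\bigr)$ and an ROI part $\sum_{t\in\cI}\mu_t\bigl(h_t(\safep)-\E_\distr h(\safep)\bigr)$, both martingales with $\mathcal H_{t-1}$-predictable weights. Because $c_t\in[0,1]$, the budget increments have range at most $1$ and the weights satisfy $\lambda_t\le1/\rho$; moreover the void action yields $\alpha\le\rho$, hence $1/\rho\le1/\alpha$. A single Azuma--Hoeffding bound combined with a union bound over the $O(T^2)$ intervals then yields $\sum_{t\in\cI}\lambda_t(g_t(\safep)-\E_\distr g(\safep))\le(\nicefrac1\rho)\cume\le(\nicefrac1\alpha)\cume$ with considerable room to spare, thanks to the constant $2$ built into $\cume$. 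This accounts for the $\nicefrac1\alpha\,\cume$ summand.

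The crux, which I expect to be the main obstacle, is the ROI part: its martingale increments have range $2\mu_t\le2\mu_\cI$, so the deviation we must control is proportional to the random, trajectory-dependent quantity $\mu_\cI=\max_{t\in\cI}\mu_t$, which forbids a direct application of Azuma with a fixed increment bound. I would resolve this by peeling over the magnitude of $\mu_\cI$. First, since the update $\mu_{t+1}=P_{\R_+}[\mu_t+\etar h_t(x_t)]$ is non-expansive and $h_t(x_t)\le1$, one obtains the a priori bound $\mu_t\le T\etar$, which is polynomial in $T$; hence only $O(\log T)$ dyadic magnitude levels are possible. On the event $\{\mu_\cI\in[b^{j-1},b^j)\}$ for a base $b>1$ chosen so that $b^2<2$, I would apply Azuma to the clipped weighted sum with predictable weights $\min(\mu_t,b^j)$, obtaining a deviation at most $b^{j-1}\cume\le\mu_\cI\cume$; the slightly-finer-than-binary base together with the factor $2$ in $\cume$ makes the Azuma exponent exceed $\log(T^2/\delta)$, so a union bound over all $O(T^2)$ intervals and the $O(\log T)$ levels keeps the total failure probability below $\delta$. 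Intersecting this event with the one from the budget part and invoking the conditional-mean computation of the first paragraph then delivers the $\delta$-safe inequality for every interval simultaneously, completing the argument.
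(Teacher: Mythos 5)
Your proposal follows essentially the same route as the paper: take $\safep=\vpi$ itself, form the martingale difference sequence with $\cH_{t-1}$-predictable weights $(\lambda_t,\mu_t)$, bound its conditional mean by $-\alpha(\lambda_t+\mu_t)\le-\alpha\mu_t$, and control the remainder by Azuma--Hoeffding plus a union bound over the $O(T^2)$ intervals. The one genuine difference is that you flag and repair a subtlety the paper glosses over: the paper applies Azuma directly with the increment bound $\mu_{\cI}+\nicefrac{1}{\alpha}$, even though $\mu_{\cI}$ is a trajectory-dependent random quantity, whereas you stratify over $O(\log T)$ dyadic levels of $\mu_{\cI}$ (using the a priori bound $\mu_t\le T\etar$) and apply Azuma with clipped, deterministic weights on each level. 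This peeling is the more rigorous treatment; its only cost is in constants, and there your bookkeeping is optimistic --- absorbing both the clipping factor $b$ and the extra union bound over levels requires the per-event confidence to be roughly $\delta/(T^2\log T)$ at exponent scaled by $b^2$, which does not quite fit inside the specific $\cume=2\sqrt{(t_2-t_1)\log(2T/\delta)}$ fixed by the paper's definition of a $\delta$-safe policy. This is cosmetic (a slightly larger constant in $\cume$ fixes it and propagates harmlessly), but as written the final inequality would need that adjustment.
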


\begin{restatable}{lemma}{LemmaSafeAdv}\label{lm:safeAdv}
    If inputs $(f_t,g_t)$ are selected adversarially, and there exists a policy $\vpi$ such that $g_t(\vpi)\le -\alpha$ and $h_t(\vpi)\le -\alpha$ for each $t \in [T]$, then there exists a $\delta$-safe policy for any $\delta\in [0,1]$.
\end{restatable}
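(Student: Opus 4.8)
The plan is to exhibit the hypothesized policy $\vpi$ itself as the witness, i.e.\ to set $\safep \defeq \vpi$ and verify the inequality of \Cref{def:safe policy} for it directly. The crucial structural difference from the stochastic companion (\Cref{lm:safeStoc}) is that here the per-round constraint bounds $g_t(\vpi) \le -\alpha$ and $h_t(\vpi) \le -\alpha$ hold \emph{deterministically} for the fixed oblivious input sequence, rather than only in expectation over $\distr$. Consequently I expect no concentration argument to be needed: the safe-policy inequality should hold pathwise, for every realization of the algorithm's internal randomness, and hence for every $\delta\in[0,1]$ — including the degenerate choice $\delta=0$, for which $\cume=0$.

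First I would establish the per-round bound. Fix a round $t$ and note that the multipliers $\lambda_t,\mu_t$ output by $\cRd$ are nonnegative, whereas $g_t(\vpi)$ and $h_t(\vpi)$ are deterministic scalars bounded above by $-\alpha$. Multiplying nonnegative quantities by these upper bounds gives, for each $t$ and each realization,
\[
    \lambda_t g_t(\vpi) + \mu_t h_t(\vpi) \le -\alpha(\lambda_t + \mu_t).
\]
Summing over $t\in\cI$ then yields $\sum_{t\in\cI}\mleft(\lambda_t g_t(\vpi) + \mu_t h_t(\vpi)\mright) \le -\alpha\sum_{t\in\cI}\lambda_t - \alpha\sum_{t\in\cI}\mu_t$.

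Next I would reconcile this with the right-hand side of \Cref{def:safe policy}. Since $\alpha>0$ and each $\lambda_t\ge 0$, the term $-\alpha\sum_{t\in\cI}\lambda_t$ is nonpositive and can be discarded, leaving $-\alpha\sum_{t\in\cI}\mu_t$. Finally, because $\cume\ge 0$, $\mu_{\cI}\ge 0$, and $\alpha>0$, the slack term $(\mu_{\cI}+1/\alpha)\cume$ is nonnegative, so $-\alpha\sum_{t\in\cI}\mu_t \le (\mu_{\cI}+1/\alpha)\cume - \alpha\sum_{t\in\cI}\mu_t$, which is exactly the required bound. Chaining the three inequalities proves that $\vpi$ is $\delta$-safe.

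As for the main obstacle: there is essentially none of substance — the argument is a short chain of sign manipulations, and the real content lives in the definition of the safe policy and in the harder stochastic lemma. The only points demanding care are bookkeeping ones: confirming that the dropped and added terms ($-\alpha\sum_{t\in\cI}\lambda_t$ and $(\mu_{\cI}+1/\alpha)\cume$) each carry the correct sign, and stating explicitly that the per-round inequality is pathwise, so the conclusion requires no high-probability qualifier and extends to $\delta=0$. This pathwise character is precisely what makes the adversarial claim strictly simpler than, and a clean mirror of, its stochastic counterpart.
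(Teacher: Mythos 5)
Your proof is correct and follows essentially the same route as the paper's: take the hypothesized policy itself as the witness, use nonnegativity of the multipliers to get $\lambda_t g_t(\vpi)+\mu_t h_t(\vpi)\le -\alpha(\lambda_t+\mu_t)\le -\alpha\mu_t$ pointwise, sum over the interval, and observe that the nonnegative slack term $(\mu_\cI+1/\alpha)\cume$ only weakens the required inequality. The paper's proof is a one-line version of exactly this chain, so no further comparison is needed.
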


\section{Regret and Violations Guarantees}\label{sec:guarantees}

In this section, we describe guarantees for the stochastic and adversarial setting provided by \cref{alg:meta alg} equipped with a weakly adaptive primal and dual algorithm. Interestingly, we prove best-of-both-worlds guarantees through a unified argument which captures both cases. This is not the case in previous work, where the analysis of the stochastic case typically requires to study convergence to a Nash equilibrium of the \emph{expected Lagrangian game} \cite{immorlica2019adversarial}, which is not well defined in the adversarial setting.

We introduce the following event that holds w.h.p., most of our results will hold deterministically given this event.

\begin{definition}\label{def:event}
	We denote with $\CE$ the event in which \cref{alg:meta alg} satisfies the following conditions: i) the primal regret minimizer has regret upper bounded by $(3/\alpha+1) \cump[T,\delta]$ for all time intervals $\cI$, and ii) the dual multipliers for the ROI constraint are such that $\mu_t\le 2/\alpha$ for each $t \in [T]$.
\end{definition}
By applying a union bound to the events of \cref{eq:interval regret} and \cref{lm:safeStoc} or \cref{lm:safeAdv} (depending on the input model), we can use \cref{lm:boundLagrangian} to get the following result.
\begin{restatable}{lemma}{lmEvent}\label{lm:event high prob}
    Event $\CE$ holds with probability at least $1-2\delta$.
\end{restatable}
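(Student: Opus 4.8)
The plan is to prove \cref{lm:event high prob} by decomposing event $\CE$ into its two constituent conditions and controlling the failure probability of each via a union bound. Recall that $\CE$ requires both (i) the primal regret minimizer to have interval regret bounded by $(3/\alpha+1)\cump[T,\delta]$ on all intervals $\cI$, and (ii) the dual ROI multipliers to satisfy $\mu_t\le 2/\alpha$ for all $t\in[T]$. First I would identify the probabilistic sources feeding into each condition. The primal regret bound \eqref{eq:interval regret} holds only with probability at least $1-\delta$, since the primal regret minimizer operates under bandit feedback and its guarantee is high-probability rather than deterministic. The boundedness of multipliers, by contrast, is not itself stochastic: \cref{lm:boundLagrangian} gives the deterministic implication that \emph{whenever} a safe policy exists and the primal regret bound holds, then $\mu_t\le 2/\alpha$. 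Thus the randomness in condition (ii) is inherited entirely from (a) the event that a $\delta$-safe policy exists and (b) the event that the primal regret bound holds.

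Next I would handle the existence of a safe policy, which is where the two input models diverge --- this is the one place the unified argument branches. In the stochastic setting, \cref{lm:safeStoc} guarantees a $\delta$-safe policy with probability at least $1-\delta$, so this contributes a failure probability of at most $\delta$. In the adversarial setting, \cref{lm:safeAdv} guarantees a $\delta$-safe policy for \emph{any} $\delta\in[0,1]$, in particular deterministically (one may take $\delta=0$), so the adversarial branch contributes no additional failure probability. In both cases the relevant ``safe policy exists'' event fails with probability at most $\delta$. The key structural observation is that the constant $(3/\alpha+1)$ appearing in \cref{def:event} is precisely the bound one obtains by substituting the multiplier bound $\mu_t\le 2/\alpha$ (together with $\lambda_t\le 1/\rho$ and $\rho\le 1$, hence $\lambda_t\le 1/\alpha$ since $\alpha\le\rho$) into the magnitude of the losses $M_{\cI}$, so that $M_\cI^2\cump[T,\delta]$ collapses to the stated interval regret bound. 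I would make explicit that condition (i) of $\CE$ as stated is the \emph{consequence} of the raw guarantee \eqref{eq:interval regret} once multipliers are known to be bounded, so that (i) and (ii) are not independent events but both conditional on the same two underlying high-probability events.

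Concretely, I would define two events: $A$, the event that the raw primal interval-regret guarantee \eqref{eq:interval regret} holds for all intervals and all $\vpi$ (so $\Pr[A^c]\le\delta$), and $S$, the event that a $\delta$-safe policy exists (so $\Pr[S^c]\le\delta$ in the stochastic case, $\Pr[S^c]=0$ in the adversarial case). On $A\cap S$, \cref{lm:boundLagrangian} applies and yields $\mu_t\le 2/\alpha$ for all $t\in[\tau]$; since after round $\tau$ the void action is played and the multiplier update for $\mu$ is driven by $h_t(\nullx)=c_t(\nullx)-f_t(\nullx)=0$, $\mu_t$ does not increase, so the bound extends to all $t\in[T]$, giving condition (ii). Feeding $\mu_t\le 2/\alpha$ back into $M_\cI$ upgrades event $A$ to condition (i). Hence $A\cap S\subseteq\CE$, and by the union bound $\Pr[\CE^c]\le\Pr[A^c]+\Pr[S^c]\le\delta+\delta=2\delta$, which is exactly the claim.

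The main obstacle I anticipate is not the union bound arithmetic but the circularity-avoidance in the interplay between conditions (i) and (ii): the primal regret bound \eqref{eq:interval regret} is phrased in terms of $M_\cI$, which depends on the realized multipliers, while the multiplier bound from \cref{lm:boundLagrangian} in turn \emph{assumes} a primal regret bound of the form $M_\cI^2\cump[T,\delta]$. I would resolve this by being careful that \eqref{eq:interval regret} holds for the realized loss sequence regardless of how large the multipliers happen to be --- it is an a.s.-type guarantee about the regret minimizer's output relative to whatever losses it actually receives --- so \cref{lm:boundLagrangian} can be invoked with the \emph{realized} $M_\cI$ without presupposing the final bound. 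Only after the multiplier bound is established do we substitute it to obtain the \emph{explicit} constant $(3/\alpha+1)$. Making this order of implication watertight, and confirming that extending the multiplier bound past $\tau$ is legitimate, are the two points I would treat with the most care.
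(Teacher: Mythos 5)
Your proposal is correct and follows essentially the same route as the paper: a union bound over the event that a $\delta$-safe policy exists (\cref{lm:safeStoc} or \cref{lm:safeAdv}) and the event that the raw primal interval-regret guarantee \eqref{eq:interval regret} holds, followed by an application of \cref{lm:boundLagrangian} to deduce the multiplier bound and hence the explicit constant in condition (i). Your additional care about the order of implications and about extending the bound past $\tau$ is sound but not a different argument.
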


We start by observing that the cumulative violation of ROI constraints must be sublinear in $T$ with high probability, under both input models. This is a direct consequence of properties of the dual regret minimizer (see \cref{sec:dual rm}), and of the bound on dual multipliers implied by \cref{lm:event high prob}.

\begin{restatable}{lemma}{hbound}\label{lemma:small violations}
    If $\CE$ holds, then $\sum_{t \in [\tau]} h_t(\vpi_t)\le 1+2/(\etar\alpha)$.
\end{restatable}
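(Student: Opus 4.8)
The plan is to read the cumulative ROI violation directly off the trajectory of the dual multiplier $\mu_t$, using the fact that the OGD update makes $\mu_t$ grow in lockstep with the accumulated realized violations. Concretely, I would invoke \cref{lm:ogd} with $t_1=1$ and $t_2=\tau$: since $\mu_1=0$, this gives $\mu_\tau\ge\etar\sum_{t\in[1,\tau-1]}h_t(x_t)$. Under event $\CE$ we have the multiplier bound $\mu_\tau\le 2/\alpha$ (condition (ii) of \cref{def:event}, itself a consequence of \cref{lm:boundLagrangian}), so rearranging yields $\sum_{t\in[1,\tau-1]}h_t(x_t)\le 2/(\etar\alpha)$. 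This is the crux: boundedness of the dual multiplier translates immediately into a bound on the realized cumulative ROI violation up to round $\tau-1$.

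Two gaps remain between this and the claimed inequality, since the target sum runs over all of $[\tau]$ and is written in terms of the policies $\vpi_t$ rather than the sampled actions $x_t$. For the boundary round $\tau$ the algorithm plays the void action, so $h_\tau(x_\tau)=h_\tau(\nullx)=0$, whereas the statement charges $h_\tau(\vpi_\tau)$, which is at most $1$ because $h_t=c_t-f_t\in[-1,1]$; this single term is exactly the source of the additive $1$ in the bound. To pass from the realized values $h_t(x_t)$ to the expected values $h_t(\vpi_t)$ for $t<\tau$, I would observe that, conditionally on the history up to round $t$, $x_t\sim\vpi_t$ and hence the differences $h_t(\vpi_t)-h_t(x_t)$ form a bounded martingale difference sequence; an Azuma--Hoeffding bound then gives $\sum_{t\in[1,\tau-1]}\bigl(h_t(\vpi_t)-h_t(x_t)\bigr)\le\cume$ with high probability. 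Chaining the three pieces produces a bound of the form $2/(\etar\alpha)+\cume+1$.

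The step I expect to require the most care is this passage from realized to expected violations, and in particular the bookkeeping of the concentration term $\cume=\tilde O(\sqrt{T})$. Since $1/\etar$ is itself of order $\sqrt{T}$ and, by \cref{lm:dual rm}, explicitly contains a $6\cume$ summand, the quantity $2/(\etar\alpha)$ is of the same order as $\cume$ and dominates it (recall $\alpha\le 1$); the task is therefore to verify that the concentration contribution can be absorbed into the leading $2/(\etar\alpha)$ term rather than appearing as a separate summand, so that the right-hand side collapses to the stated $1+2/(\etar\alpha)$. I would also ensure that the high-probability Azuma event used here is the same one already unioned into $\CE$ in \cref{lm:event high prob}, so that the conclusion of \cref{lemma:small violations} holds deterministically once $\CE$ is assumed, consistent with its ``if event $\CE$ holds'' phrasing.
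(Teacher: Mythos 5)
Your core argument is exactly the paper's: combine the multiplier bound $\mu_\tau\le 2/\alpha$ from event $\CE$ with \cref{lm:ogd} (applied from $t_1=1$ with $\mu_1=0$) to get $\sum_{t\in[\tau-1]}h_t(x_t)\le \mu_\tau/\etar\le 2/(\etar\alpha)$, then pay an additive $1$ for the last round since $h_t\in[-1,1]$. The only place you diverge is the extra Azuma step converting realized violations $h_t(x_t)$ into expected violations $h_t(\vpi_t)$. You are right that there is a mismatch worth flagging --- the dual update and \cref{lm:ogd} are stated in terms of $h_t(x_t)$, while the lemma is phrased with $h_t(\vpi_t)$, and the paper's proof silently substitutes one for the other --- but your proposed resolution does not go through as written: a positive concentration term $\cume$ cannot be ``absorbed'' into $2/(\etar\alpha)$ simply because the latter happens to be of the same or larger order; the honest output of your argument is $1+2/(\etar\alpha)+\cume$, a same-order but strictly weaker bound than the one stated. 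The cleaner fix, consistent with \Cref{thm:main stoc,thm:main adv} which state the guarantee as $\sum_t h_t(x_t)$, is to read the cumulative violation as being over realized actions; under that reading your first two paragraphs already constitute a complete proof and no concentration argument is needed.
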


Then, we define the following class of policies. 

\begin{definition}[$\mleft(\delta,q,\OPT\mright)$-optimal policy]\label{def: good policy}
    Given $\delta\in(0,1]$, $q\in(0,1]$, a sequence of $T$ inputs $\{(f_t,c_t)\}_{t=1}^T$, and the value of a baseline $\OPT$, we say that a policy $\vpi$ is \emph{$\mleft(\delta,q,\OPT\mright)$-optimal}, if i) $\sum_{t\in [T]} f_t(\vpi)\ge q\cdot T\cdot \OPT - \cumg$, and ii) $\sum_{t \in [t']} \lambda_t g_t(\vpi)  +\mu_t h_t(\vpi)\le \mleft(\mu_{[t']}+\frac{1}{\alpha}\mright) \cumg $ for each $t' \in [T]$, where $\mu_{[t']}$ is the largest multiplier $\mu_t$ observed up to $t'$. 
\end{definition}

A $\mleft(\delta,q,\OPT \mright)$-optimal policy guarantees a reward which is a fraction $q$ of the reward of the baseline up to a sublinear term, and guarantees that the cumulative value of the penalty due to the Lagrangian relaxation is vanishing in time.

First, we need the following result that holds both in the stochastic and adversarial setting.

\begin{restatable}{lemma}{lemmaBeforeEnd} \label{lm:highG}
    \Cref{alg:meta alg} guarantees that 
    $\sum_{t \in [\tau]} \lambda_t g(x_t)\ge T-\tau-1/\rho-\cumdb.$
\end{restatable}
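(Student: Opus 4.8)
The plan is to read the quantity as $\sum_{t\in[\tau]}\lambda_t g_t(x_t)$ with $g_t(x_t)=c_t(x_t)-\rho$, and to lower-bound it by applying the \emph{standard} online gradient descent regret inequality to the $\lambda$-component of $\cRd$, but in the direction that compares against the largest admissible multiplier $\lambda=1/\rho$. Concretely, from the update $\lambda_{t+1}=P_{[0,1/\rho]}(\lambda_t+\etab g_t(x_t))$ and nonexpansiveness of the projection, one gets for every fixed $\lambda\in[0,1/\rho]$ the telescoping inequality $\sum_{t\in[\tau]}\lambda g_t(x_t)-\sum_{t\in[\tau]}\lambda_t g_t(x_t)\le \frac{(\lambda-\lambda_1)^2}{2\etab}+\frac{\etab}{2}\sum_{t\in[\tau]}g_t(x_t)^2$. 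This is just the usual OGD guarantee, read as a lower bound on $\sum_{t\in[\tau]}\lambda_t g_t(x_t)$ rather than the upper bound used in \cref{lm:dual rm}.

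First I would instantiate $\lambda=1/\rho$ and $\lambda_1=0$, so the first regret term equals $\frac{1}{2\etab\rho^2}=\frac{T^{1/2}}{2\rho}$ using $\etab=1/(\rho T^{1/2})$. Since each $g_t(x_t)=c_t(x_t)-\rho\in[-\rho,1-\rho]$ has $|g_t(x_t)|\le 1$, the second term is at most $\frac{\etab}{2}\tau\le \frac{\etab}{2}T=\frac{T^{1/2}}{2\rho}$. Both terms are therefore $O(T^{1/2}/\rho)$ and are absorbed into $\cumdb$. This leaves the bound $\sum_{t\in[\tau]}\lambda_t g_t(x_t)\ge \tfrac{1}{\rho}\sum_{t\in[\tau]}g_t(x_t)-\cumdb$, and expanding $g_t(x_t)=c_t(x_t)-\rho$ gives $\tfrac{1}{\rho}\sum_{t\in[\tau]}g_t(x_t)=\tfrac{1}{\rho}\sum_{t\in[\tau]}c_t(x_t)-\tau$.

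The remaining step is the budget-depletion accounting. By definition $\tau$ is the first round with $B_\tau<1$, so the void action is played at round $\tau$ and contributes $c_\tau(x_\tau)=0$; hence $\sum_{t\in[\tau]}c_t(x_t)=\sum_{t=1}^{\tau-1}c_t(x_t)=B-B_\tau> B-1=\rho T-1$. Substituting, $\tfrac{1}{\rho}\sum_{t\in[\tau]}c_t(x_t)>T-1/\rho$, so that $\tfrac{1}{\rho}\sum_{t\in[\tau]}g_t(x_t)>T-\tau-1/\rho$, and combining with the OGD bound yields $\sum_{t\in[\tau]}\lambda_t g_t(x_t)\ge T-\tau-1/\rho-\cumdb$, as claimed.

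The step I expect to require the most care is the depletion bookkeeping: correctly arguing that total spend over $[\tau]$ exceeds $B-1$ (i.e.\ that $B_\tau<1$ and that costs lie in $[0,1]$ so $B$ is overshot by less than one unit), and that the round $\tau$ itself already plays $\nullx$ and hence adds nothing to either $\sum c_t(x_t)$ or the budget consumption. The OGD inequality itself is routine; the only subtlety there is bounding $g_t(x_t)^2$ uniformly and checking that both regret contributions fit inside the stated $O(T^{1/2}/\rho)$ order of $\cumdb$. One should also note the degenerate case in which the budget is never depleted, where $T-\tau\le 0$ makes the inequality weaker and hence immediate.
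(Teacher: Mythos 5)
Your proposal is correct and follows essentially the same route as the paper: invoke the dual no-regret guarantee for the $\lambda$-component against the comparator $\lambda=1/\rho$, then use the budget-depletion accounting $\sum_{t\in[\tau]}c_t(x_t)>B-1=\rho T-1$, with the undepleted case $\tau=T$ handled separately via the trivial comparator $\lambda=0$ (the paper's Case 1). The only difference is that you unpack the OGD telescoping bound explicitly where the paper simply cites its dual-regret lemma, which is a presentational rather than substantive distinction.
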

Given primal (resp., dual) regret minimizer with guarantees $\cump[T,\delta]$ (resp., $\cumdr$ and $\cumdb$), and $\delta\in(0,1]$, we let 
\[
\reg(T,\alpha,\delta)\defeq \nicefrac{1}{\alpha}  +\mleft(\nicefrac{3}{\alpha}+1\mright) \mleft(\cump[T,\delta]+\cumg\mright)+\cumdr+\cumdb.
\]
The existence of a $(\delta,q,\OPT)$-optimal policy implies the following bound with respect to the generic baseline $\OPT$.

\begin{restatable}{lemma}{lemmaRegretUnif}\label{lm:regret unified}
    Suppose event $\CE$ holds and that there exists a $\mleft(\delta,q,\OPT\mright)$-optimal policy. Then,
    $
        \textstyle\sum_{t \in [\tau]} f_t(x_t)\ge  q T\OPT - \reg(T,\alpha,\delta).
    $ 
\end{restatable}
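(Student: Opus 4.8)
The plan is to apply the primal regret minimizer's interval guarantee on the single interval $\cI=[1,\tau]$, taking as comparator the $(\delta,q,\OPT)$-optimal policy $\vpi$. Since $x_t=\nullx$ (hence $f_t(x_t)=0$) for every $t>\tau$, this is precisely the interval on which the algorithm accrues reward, and it is data-dependent, which is exactly why weak adaptivity is required. Conditioning on $\CE$, the interval regret is at most $(3/\alpha+1)\cump[T,\delta]$ (\cref{def:event}). Expanding $\lossp$ from \cref{eq:lossp} — where the additive constant $1+\lambda_t+\mu_t$ cancels between $\lossp(x_t)$ and $\lossp(\vpi)$ — and rearranging gives
\[
\sum_{t\in[\tau]} f_t(x_t)\ge \sum_{t\in[\tau]} f_t(\vpi)+\sum_{t\in[\tau]}\lambda_t g_t(x_t)+\sum_{t\in[\tau]}\mu_t h_t(x_t)-\sum_{t\in[\tau]}\mleft(\lambda_t g_t(\vpi)+\mu_t h_t(\vpi)\mright)-\mleft(\tfrac{3}{\alpha}+1\mright)\cump[T,\delta].
\]
I would then lower-bound the four sums separately.

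For the comparator reward I use condition i) of \cref{def: good policy}, which bounds $\sum_{t\in[T]} f_t(\vpi)$; since $f_t(\vpi)\le 1$ on the at most $T-\tau$ discarded rounds, $\sum_{t\in[\tau]} f_t(\vpi)\ge qT\OPT-\cumg-(T-\tau)$. For the realized budget term I invoke \cref{lm:highG}, giving $\sum_{t\in[\tau]}\lambda_t g_t(x_t)\ge T-\tau-1/\rho-\cumdb$; crucially the $+(T-\tau)$ here cancels the $-(T-\tau)$ above, which is the mechanism by which \cref{lm:highG} pays for the reward forgone once the budget is exhausted. For the realized ROI term I use the no-regret property of the dual update on $\mu$ (\cref{lm:dual rm}) against the fixed comparator $\mu=0$ on $[1,\tau]$; since $\mu_1=0$ the quadratic penalty vanishes and $\sum_{t\in[\tau]}\mu_t h_t(x_t)\ge -\cumdr$. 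Finally, for the comparator penalty I apply condition ii) of \cref{def: good policy} at $t'=\tau$ together with $\mu_t\le 2/\alpha$ guaranteed by $\CE$, so $\mu_{[\tau]}\le 2/\alpha$ and $\sum_{t\in[\tau]}\mleft(\lambda_t g_t(\vpi)+\mu_t h_t(\vpi)\mright)\le (2/\alpha+1/\alpha)\cumg=(3/\alpha)\cumg$.

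Substituting these four bounds, the $\pm(T-\tau)$ terms cancel, and collecting the $\cumg$ contributions ($-\cumg-(3/\alpha)\cumg=-(3/\alpha+1)\cumg$) leaves
\[
\sum_{t\in[\tau]} f_t(x_t)\ge qT\OPT-\mleft(\tfrac{3}{\alpha}+1\mright)\mleft(\cump[T,\delta]+\cumg\mright)-\tfrac{1}{\rho}-\cumdb-\cumdr.
\]
To match $\reg(T,\alpha,\delta)$, which carries $1/\alpha$ in place of $1/\rho$, I use $\alpha\le\rho$: since $\E g(\vpi)=\E c(\vpi)-\rho\ge-\rho$ for every policy (and likewise pointwise in the adversarial definition of $\alpha$), we get $\alpha=-\inf_{\vpi}\max\{\E g,\E h\}\le\rho$, hence $1/\rho\le 1/\alpha$, and the right-hand side is at least $qT\OPT-\reg(T,\alpha,\delta)$. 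The main obstacle is the truncation at the data-dependent budget-exhaustion time $\tau$: the baseline is defined over all $T$ rounds while reward is collected only up to $\tau$, and the argument closes only because the interval (weakly adaptive) guarantee can be invoked on the random interval $[1,\tau]$ and \cref{lm:highG} supplies a matching $+(T-\tau)$; aligning the multiplier comparators ($\mu=0$ for the realized ROI sum, condition ii) for the comparator penalty) so that every cross-term collapses to the stated constants is the delicate bookkeeping.
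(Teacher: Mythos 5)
Your proof is correct and follows essentially the same route as the paper's: the interval primal-regret bound on $[1,\tau]$ against the $(\delta,q,\OPT)$-optimal comparator, condition ii) at $t'=\tau$ with $\mu_{[\tau]}\le 2/\alpha$ for the comparator penalty, the dual no-regret bound at $\mu=0$ for the realized ROI term, \cref{lm:highG} for the realized budget term with the $T-\tau$ cancellation, and condition i) for the comparator reward. The only (welcome) addition is that you explicitly justify replacing $1/\rho$ by $1/\alpha$ via $\alpha\le\rho$, a step the paper leaves implicit.
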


\vspace{-.1cm}
Next, we show that a suitable $\mleft(\delta,q,\OPT\mright)$-optimal policy exists w.h.p. both in the stochastic and adversarial setting. 

\begin{restatable}{lemma}{LemmaOptStoc}\label{lm:optimal policy stoc}
    In the stochastic setting, with probability at least $1-2\delta$ there exists a $(\delta,1,\OPT_\distr)$-optimal policy (where $\OPT_\distr$ is the optimal value of \ref{eq:opt lp gen}).
\end{restatable}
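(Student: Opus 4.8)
The plan is to exhibit an explicit $(\delta,1,\OPT_\distr)$-optimal policy in the stochastic setting, namely the optimal solution $\vpi^\star$ of the offline program \ref{eq:opt lp gen}. Since the inputs are drawn i.i.d.\ from $\distr$, the expected per-round reward of $\vpi^\star$ equals $\OPT_\distr$ and its expected per-round constraint violations $\E_\distr g(\vpi^\star)$ and $\E_\distr h(\vpi^\star)$ are both at most $0$ by feasibility in \ref{eq:opt lp gen}. The two conditions in \Cref{def: good policy} are then statements about how the realized sums $\sum_t f_t(\vpi^\star)$, $\sum_t g_t(\vpi^\star)$, $\sum_t h_t(\vpi^\star)$ concentrate around their expectations, so the core of the argument is a concentration (Azuma/Hoeffding) analysis, which is exactly what the $\cumg$ and $\cume$-type terms are designed to absorb.

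First I would verify condition (i). Because $\vpi^\star$ is a fixed policy, the quantities $f_t(\vpi^\star)=\E_{x\sim\vpi^\star}f_t(x)$ are i.i.d.\ bounded random variables with mean $\OPT_\distr$, so by a one-sided Hoeffding/Azuma bound, with probability at least $1-\delta$ we have $\sum_{t\in[T]} f_t(\vpi^\star)\ge T\,\OPT_\distr-\cumg$, matching (i) with $q=1$. Here I would set $\cumg$ to the appropriate $\tilde O(\sqrt T)$ deviation term (of order $\cume[T,\delta]$), noting the payoffs lie in $[0,1]$.

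Next I would establish condition (ii), the bound on $\sum_{t\in[t']}\lambda_t g_t(\vpi^\star)+\mu_t h_t(\vpi^\star)$ for every prefix $t'$. The subtlety here is that $\lambda_t,\mu_t$ are \emph{not} fixed but are produced by $\cRd$ and depend on the history, so the summands are not i.i.d.; however, $\vpi^\star$ is fixed and $\lambda_t,\mu_t$ are measurable with respect to the past, so $\lambda_t\big(g_t(\vpi^\star)-\E_\distr g(\vpi^\star)\big)+\mu_t\big(h_t(\vpi^\star)-\E_\distr h(\vpi^\star)\big)$ forms a martingale difference sequence. Using the boundedness $\lambda_t\le 1/\rho$ (enforced by the projection) and $\mu_t\le 2/\alpha$ (guaranteed under event $\CE$ via \Cref{lm:boundLagrangian}), I would apply an Azuma-type inequality to control the fluctuation around $\sum_{t\in[t']}\big(\lambda_t\E_\distr g(\vpi^\star)+\mu_t\E_\distr h(\vpi^\star)\big)$, and then use $\E_\distr g(\vpi^\star)\le 0$, $\E_\distr h(\vpi^\star)\le 0$ together with $\lambda_t,\mu_t\ge 0$ to discard the mean terms. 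The resulting deviation should be bounded by $(\mu_{[t']}+1/\alpha)\cumg$, where the $\mu_{[t']}$ and $1/\alpha$ factors arise precisely because the multiplier magnitudes scale the martingale increments. A union bound over the $T$ prefixes $t'$ (and over the two concentration events) yields the claimed $1-2\delta$ probability.

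The main obstacle I expect is condition (ii): getting the deviation bound into exactly the form $(\mu_{[t']}+1/\alpha)\cumg$ requires care, since a naive Azuma bound scaled by $\max_t(\lambda_t+\mu_t)\le 1/\rho+2/\alpha$ would give the wrong shape. The trick is to handle the $\mu$-weighted martingale with a range-dependent bound in terms of $\mu_{[t']}$ (the prefix maximum of the multiplier), rather than a global constant, and to absorb the $\lambda$-weighted part — whose multipliers are uniformly bounded by $1/\rho$ — into the $1/\alpha$ slack or into $\cumg$. Designing the concentration argument so that the ROI term carries the $\mu_{[t']}$ factor while the budget term stays a clean additive $\tilde O(\sqrt T)$ is the delicate bookkeeping step, but it is essentially the same martingale estimate already used to prove the existence of a $\delta$-safe policy in \Cref{lm:safeStoc}, so I would mirror that proof structure closely.
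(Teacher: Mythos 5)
Your proposal is essentially the paper's proof: take the optimizer $\vpi^\star$ of \ref{eq:opt lp gen}, use Hoeffding/Azuma on $\sum_t f_t(\vpi^\star)$ for condition (i), and for condition (ii) apply Azuma to the martingale difference sequence $\lambda_t g_t(\vpi^\star)+\mu_t h_t(\vpi^\star)-\lambda_t\E_\distr g(\vpi^\star)-\mu_t\E_\distr h(\vpi^\star)$ with a union bound over the prefixes $t'$, mirroring \cref{lm:safeStoc}, then drop the mean terms via feasibility and combine the two events into $1-2\delta$. The only thing to delete is your appeal to event $\CE$ and the bound $\mu_t\le 2/\alpha$: the paper explicitly notes that this lemma must \emph{not} be conditioned on $\CE$, and instead keeps the random prefix maximum $\mu_{[t']}$ inside the deviation bound --- which is exactly the ``range-dependent bound in terms of $\mu_{[t']}$'' you correctly identify as the right resolution in your final paragraph.
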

This is saying that, given a distribution $\distr$, there exists with high probability a policy satisfying \Cref{def: good policy}. 
Stochasticity of the environment 
is used to prove that the solution to \ref{eq:opt lp gen} satisfies the second condition of \Cref{def: good policy} for \emph{all} $t\in[T]$. 
If we tried a similar approach in the adversarial setting, the solution to $\term{LP}_{\bar\gamma}$ would guarantee that the second condition is satisfied over the whole time horizon, but \emph{not} necessarily at earlier time steps $t<T$. Moreover, feasibility in expectation has no implications on feasibility of a policy under the adversarial sequence $(\lambda_t,f_t,\mu_t,c_t)$, in which dual variables are optimized to ``punish violations''.
However, it is possible to show the existence of a policy satisfying \Cref{def: good policy} even in the adversarial setting via a different approach. 
We build a suitable convex combination between a strictly feasible policy $\safep$ guaranteeing that constraints are satisfied by at least $\alpha>0$ for each $t\in [T]$, and the optimal unconstrained policy $\vpi^\ast$ maximizing $\sum_{t\in [T]}f_t(\vpi)$. The following lemma employs a policy $\hat\vpi$ such that, for all $x\in\cX$, $\hat\vpi_x\defeq \safep_x/(1+\alpha) + \alpha\vpi^\ast_x/(1+\alpha)$.

\vspace{.1cm}
\begin{restatable}{lemma}{LemmaOptAdv}\label{lm:optimal policy adv}
    In the adversarial setting, there always exists a $(0,\alpha/(1+\alpha),\OPT_{\bar \gamma})$-optimal policy.
\end{restatable}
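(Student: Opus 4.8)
The plan is to verify directly that the explicitly-defined mixture $\hat\vpi$, with $\hat\vpi_x = \safep_x/(1+\alpha) + \alpha\,\vpi^\ast_x/(1+\alpha)$, satisfies the two conditions of \Cref{def: good policy} with $\delta = 0$, $q = \alpha/(1+\alpha)$, and baseline $\OPT_{\bar\gamma}$. Taking $\delta = 0$ is legitimate here because, by \Cref{lm:safeAdv}, the adversarial setting admits a deterministic safe policy and the concentration slack vanishes, i.e.\ $\cumg = 0$. Consequently both conditions reduce to clean deterministic inequalities: condition (i) with right-hand side $\tfrac{\alpha}{1+\alpha}T\OPT_{\bar\gamma}$, and condition (ii) with right-hand side $0$. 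Throughout I take $\safep$ to be the policy attaining the margin in the definition of $\alpha$, so that $g_t(\safep)\le -\alpha$ and $h_t(\safep)\le -\alpha$ for every $t\in[T]$, and $\vpi^\ast$ to be an unconstrained maximizer of $\sum_t f_t(\vpi)$.

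For condition (ii) I would first establish, using linearity of $g_t$ and $h_t$ in the policy, the pointwise-in-$t$ bounds $g_t(\hat\vpi)\le 0$ and $h_t(\hat\vpi)\le 0$. For the ROI constraint, $h_t(\hat\vpi) = \tfrac{1}{1+\alpha}h_t(\safep) + \tfrac{\alpha}{1+\alpha}h_t(\vpi^\ast) \le \tfrac{1}{1+\alpha}(-\alpha) + \tfrac{\alpha}{1+\alpha}\cdot 1 = 0$, where $h_t(\vpi^\ast)=\E_{x\sim\vpi^\ast}[c_t(x)-f_t(x)]\le 1$ uses $c_t\le 1$ and $f_t\ge 0$; the budget constraint is analogous using $g_t(\vpi^\ast)\le 1-\rho$. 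Since the dual regret minimizer always keeps $\lambda_t\in[0,1/\rho]$ and $\mu_t\in\R_{\ge 0}$ nonnegative, each summand $\lambda_t g_t(\hat\vpi)+\mu_t h_t(\hat\vpi)$ is nonpositive, so $\sum_{t\in[t']}\bigl(\lambda_t g_t(\hat\vpi)+\mu_t h_t(\hat\vpi)\bigr)\le 0$ for every $t'$, which is exactly condition (ii). Crucially this holds for \emph{every} realization of the multiplier sequence, which is why the statement can assert existence \emph{always}, deterministically.

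For condition (i) I would discard the nonnegative reward contribution of $\safep$, i.e.\ $f_t(\safep)\ge 0$, to obtain $f_t(\hat\vpi)\ge \tfrac{\alpha}{1+\alpha}f_t(\vpi^\ast)$, and then lower-bound $\sum_t f_t(\vpi^\ast)$ by $T\OPT_{\bar\gamma}$. The latter follows because $\E_{f\sim\bar\gamma}f(\vpi)=\tfrac{1}{T}\sum_t f_t(\vpi)$ by definition of $\bar\gamma$, so $\OPT_{\bar\gamma}$ is the value of a \emph{constrained} maximization of $\tfrac{1}{T}\sum_t f_t(\vpi)$, which cannot exceed the unconstrained optimum attained by $\vpi^\ast$; hence $\tfrac{1}{T}\sum_t f_t(\vpi^\ast)\ge \OPT_{\bar\gamma}$. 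Combining the two steps yields $\sum_t f_t(\hat\vpi)\ge \tfrac{\alpha}{1+\alpha}T\OPT_{\bar\gamma}$, which is condition (i) since $\cumg = 0$.

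The only genuinely delicate point, and the step I would double-check most carefully, is that the mixture weights $\tfrac{1}{1+\alpha}$ and $\tfrac{\alpha}{1+\alpha}$ are precisely those needed to neutralize the worst-case constraint violation of $\vpi^\ast$: a single unit of safety margin from $\safep$ (scaled by $\tfrac{1}{1+\alpha}$) exactly cancels the at-most-unit violation of $\vpi^\ast$ (scaled by $\tfrac{\alpha}{1+\alpha}$), via the identity $-\alpha\cdot\tfrac{1}{1+\alpha} + 1\cdot\tfrac{\alpha}{1+\alpha}=0$; the same weights cost only a factor $\tfrac{\alpha}{1+\alpha}$ in reward, which is exactly where the competitive ratio $q=\alpha/(1+\alpha)$ originates. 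I would also confirm $\alpha\le \rho\le 1$ (the void action $\nullx$ forces $\alpha\le\rho$, and $g_t\ge -\rho$), so that $q=\alpha/(1+\alpha)\in(0,1]$ is an admissible value and the bound $g_t(\vpi^\ast)\le 1-\rho$ combines correctly with the margin of $\safep$.
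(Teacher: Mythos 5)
Your proposal is correct and follows essentially the same route as the paper's proof: the same mixture $\hat\vpi$ with weights $\tfrac{1}{1+\alpha}$ and $\tfrac{\alpha}{1+\alpha}$, the same per-round verification that $g_t(\hat\vpi)\le 0$ and $h_t(\hat\vpi)\le 0$ (hence condition (ii) via nonnegativity of the multipliers), and the same lower bound on reward obtained by discarding $f_t(\safep)\ge 0$ and using that the unconstrained optimum dominates $\OPT_{\bar\gamma}$. Your additional remarks on why $\delta=0$ is legitimate and on the exact cancellation in the mixture weights are correct but not needed beyond what the paper already does.
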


Now, we provide the overall guarantees of the dual-balancing primal-dual algorithm (\cref{alg:meta alg}). 

\begin{theorem}[Stochastic setting]\label{thm:main stoc}
    In the stochastic setting, for $\delta>0$, with probability at least $1-4\delta$ \Cref{alg:meta alg} guarantees 
    $
       \textstyle T\OPT_\distr - \sum_{t\in [T]}f_t(x_t)\le \reg(T,\alpha,\delta).
    $
    Moreover, we have the following guarantees on constraint violations: $\sum_{t \in [T]} h_t(x_t)\le 1+2/(\etar\alpha)$ and $\sum_{t\in[T]}c_t(x_t)\le B$.
\end{theorem}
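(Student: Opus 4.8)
The plan is to assemble the theorem from the building blocks established in \cref{sec: bound multipliers,sec:guarantees}, so that the only genuinely new work is the probability accounting and the treatment of the rounds after the budget is exhausted. I would organize the argument into the three claimed guarantees: the budget bound, the reward bound, and the ROI-violation bound.

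First I would dispose of the budget constraint, which I expect to hold \emph{deterministically}, independent of the realized randomness. By construction \cref{alg:meta alg} samples $x_t\sim\vpi_t$ only while $B_t\ge 1$, and otherwise plays the void action $\nullx$ with $c_t(\nullx)=0$. Since every cost lies in $[0,1]$, the update $B_{t+1}=B_t-c_t(x_t)$ can never drive $B_t$ below $0$: while $B_t\ge 1$ we spend at most $1$, and once $B_t<1$ we spend nothing. Hence $B_{T+1}\ge 0$, and the total expenditure $\sum_{t\in[T]}c_t(x_t)=B-B_{T+1}\le B$, with no appeal to the high-probability event.

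Next I would set up the conditioning event via a union bound. By \cref{lm:event high prob} the event $\CE$ holds with probability at least $1-2\delta$, and by \cref{lm:optimal policy stoc} a $(\delta,1,\OPT_\distr)$-optimal policy exists with probability at least $1-2\delta$; intersecting these two and applying a union bound, both hold simultaneously with probability at least $1-4\delta$. Conditioning on this intersection, the reward guarantee follows by feeding these facts into \cref{lm:regret unified} with $q=1$ and $\OPT=\OPT_\distr$, which yields $\sum_{t\in[\tau]}f_t(x_t)\ge T\OPT_\distr-\reg(T,\alpha,\delta)$. The observation bridging $[\tau]$ and the full horizon is that for every $t>\tau$ the action is $\nullx$, so $f_t(x_t)=0$ and $\sum_{t\in[T]}f_t(x_t)=\sum_{t\in[\tau]}f_t(x_t)$; rearranging gives the claimed bound $T\OPT_\distr-\sum_{t\in[T]}f_t(x_t)\le\reg(T,\alpha,\delta)$. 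The ROI bound follows the same template: conditioning on $\CE$, \cref{lemma:small violations} gives $\sum_{t\in[\tau]}h_t(\vpi_t)\le 1+2/(\etar\alpha)$, and since $h_t(\nullx)=c_t(\nullx)-f_t(\nullx)=0$ the rounds after $\tau$ contribute nothing, so the bound extends to $[T]$.

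The steps above are essentially bookkeeping, and I expect the main obstacle to be reconciling the expected per-round violation $h_t(\vpi_t)$ in \cref{lemma:small violations} with the quantity $\sum_{t\in[T]}h_t(x_t)$ appearing in the theorem. If the intended reading is the expected violation, consistent with the ``soft, in-expectation'' description of the ROI constraint, this identification is immediate; otherwise one must insert a concentration step, bounding the martingale $\sum_t\mleft(h_t(x_t)-h_t(\vpi_t)\mright)$ by an Azuma--Hoeffding argument (contributing an extra $\cume$-type term and possibly one more $\delta$ in the failure probability) to pass from the policies $\vpi_t$ to the sampled actions $x_t$. Pinning down which interpretation is meant, and threading the resulting error term through $\reg(T,\alpha,\delta)$, is the only place where care beyond direct citation of the earlier lemmas is needed.
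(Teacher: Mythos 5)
Your proof follows the paper's argument exactly: the budget bound is deterministic by construction of \cref{alg:meta alg}, the reward bound follows from a union bound over event $\CE$ (\cref{lm:event high prob}) and the existence of a $(\delta,1,\OPT_\distr)$-optimal policy (\cref{lm:optimal policy stoc}) fed into \cref{lm:regret unified}, and the ROI bound is \cref{lemma:small violations}. The discrepancy you flag between $h_t(\vpi_t)$ and $h_t(x_t)$ is a genuine subtlety that the paper's own one-line proof also leaves unaddressed, so it is not a gap in your argument relative to the paper's.
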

\begin{proof} The regret upper bound holds since event $\CE$ holds with probability at least $1-2\delta$ (\cref{lm:event high prob}), and by combining \cref{lm:optimal policy stoc} and \cref{lm:regret unified}. The ROI constraint is upper bounded by \cref{lemma:small violations}, and the budget constraint is strictly satisfied by construction of \cref{alg:meta alg}.
\end{proof}
Since the primal regret minimizer guarantees a high-probability primal regret upper bound of order $T^{1/2}$ (see \Cref{eq:interval regret}), then the cumulative regret and the cumulative ROI constraint violation of \Cref{thm:main stoc} are of order $\tilde O(\sqrt{T})$, while the budget constraint is strictly satisfied.

Analogously, by exploiting \cref{lm:optimal policy adv}, we have the following guarantees for the adversarial setting.
\begin{theorem}[Adversarial setting]\label{thm:main adv}
    Suppose the sequence of inputs $\gamma=(f_t,c_t)_{t=1}^T$ is selected by an oblivious adversary. Then, for $\delta> 0$, with probability at least $1-2\delta$, \Cref{alg:meta alg} guarantees
   $
    \textstyle \frac{\alpha}{1+\alpha}\OPT_{\bar \gamma} - \sum_{t \in [T]} f_t(x_t)\le \reg(T,\alpha,\delta).~\footnote{The same guarantees would hold with respect to the optimal unconstrained policy maximizing $\sum f_t(\vpi)$.}
    $
    Moreover, it holds that $\sum_{t \in [T]} h_t(x_t)\le 1+2/(\etar\alpha)$ and $\sum_{t\in[T]}c_t(x_t)\le B$.
\end{theorem}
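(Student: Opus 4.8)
The plan is to treat \Cref{thm:main adv} as the adversarial instantiation of the same unified machinery used for \Cref{thm:main stoc}, swapping in the adversarial ingredients. First I would invoke \Cref{lm:event high prob} to assert that the good event $\CE$ holds with probability at least $1-2\delta$, and condition all subsequent (deterministic) reasoning on $\CE$. The decisive structural difference from the stochastic case is that the relevant near-optimal policy is supplied by \Cref{lm:optimal policy adv}, which produces a $(0,\alpha/(1+\alpha),\OPT_{\bar\gamma})$-optimal policy \emph{always}, i.e.\ deterministically rather than merely with high probability. This is precisely why the success probability here is $1-2\delta$ instead of the $1-4\delta$ of \Cref{thm:main stoc}: no second union-bound term is needed to guarantee the existence of a good policy.

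With that policy available, I would instantiate \Cref{lm:regret unified} with $q=\alpha/(1+\alpha)$ and $\OPT=\OPT_{\bar\gamma}$, which on event $\CE$ gives $\sum_{t\in[\tau]} f_t(x_t)\ge \frac{\alpha}{1+\alpha}\,T\,\OPT_{\bar\gamma}-\reg(T,\alpha,\delta)$. To reach the full horizon I would note that after round $\tau$ \Cref{alg:meta alg} plays the void action $\nullx$ with $f_t(\nullx)=0$, so the rounds beyond $\tau$ add nothing and $\sum_{t\in[T]}f_t(x_t)\ge\sum_{t\in[\tau]}f_t(x_t)$; rearranging the resulting inequality yields the claimed regret bound.

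For the constraint guarantees, the ROI violation bound is handed to us directly by \Cref{lemma:small violations}, which on event $\CE$ gives $\sum_{t\in[\tau]} h_t(\vpi_t)\le 1+2/(\etar\alpha)$; I would extend this to $[T]$ using $h_t(\nullx)=c_t(\nullx)-f_t(\nullx)=0$, so the void-action rounds are harmless. The budget constraint $\sum_{t\in[T]}c_t(x_t)\le B$ holds unconditionally by construction of \Cref{alg:meta alg}, since the algorithm switches to $\nullx$ as soon as the remaining budget falls below $1$ and every per-round cost is at most $1$.

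Because every genuinely hard component—bounding the multipliers without knowing $\alpha$ (via \Cref{lm:boundLagrangian}), constructing the convex-combination policy $\hat\vpi$ (\Cref{lm:optimal policy adv}), and the unified regret analysis (\Cref{lm:regret unified})—is already packaged in earlier lemmas, the present argument is essentially bookkeeping. The only places I expect to slow down are the probabilistic accounting (confirming that $\CE$ is the sole high-probability ingredient, and that the $\delta=0$ attached to the adversarial optimal policy is consistent with the ambient $\delta>0$ entering $\cump[T,\delta]$ inside $\reg(T,\alpha,\delta)$, with the $\cume$-type term vanishing at $\delta=0$) and the $[\tau]\to[T]$ extension, where I would check that the policy-level quantity $h_t(\vpi_t)$ in \Cref{lemma:small violations} and the realized $h_t(x_t)$ in the theorem statement are reconciled through the void-action rounds without reopening any concentration gap.
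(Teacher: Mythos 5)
Your proposal is correct and follows essentially the same route as the paper: condition on the event $\CE$ (probability $1-2\delta$ via \Cref{lm:event high prob}), feed the deterministic $(0,\alpha/(1+\alpha),\OPT_{\bar\gamma})$-optimal policy of \Cref{lm:optimal policy adv} into \Cref{lm:regret unified}, and obtain the violation bounds from \Cref{lemma:small violations} and the budget-depletion rule of \Cref{alg:meta alg}. Your explanation of why the failure probability drops from $1-4\delta$ to $1-2\delta$ and your handling of the $[\tau]\to[T]$ extension via the void action match the paper's (terser) argument.
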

Our competitive ratio matches that of \citet{castiglioni2022unifying}, and, in the case in which we only have budget constraints, it yields the state-of-the-art $1/\alpha$ competitive ratio of \citet{Castiglioni2022Online}.

\section{Relaxing the Safe-Policy Assumption}\label{sec:generalization}

In the adversarial setting, the usual assumption for recovering Slater's condition is that there exists a policy guaranteeing that constraints are satisfied by at least $\alpha>0$ for each $t$ \citep{chen2017online,yi2020distributed,castiglioni2022unifying}. Our analysis, up to this point, made the same assumption (\cref{assumption adv}), except that, unlike those past works, we do not need to know the value of $\alpha$. Now, we show that our analysis carries over with the following looser requirement.
\begin{assumption}\label{assumption new}
    There exists a policy $\safep\in\Pi$ such that, for each interval $\cI=[t_1,t_2]$ with $t_2-t_1= k$, we have $\sum_{t \in \cI} g_t(\safep)\le -\alpha k$ and $\sum_{t \in \cI} h_t(\safep)\le -\alpha k$.
\end{assumption}
The traditional assumption of requiring a safe policy for \emph{each} $t$ would require the decision maker to have an action yielding expected ROI strictly above their target for each round $t$. 
This may not hold in practice. For example, in the case of repeated ad auctions, if we assume one ad placement is being allocated at each $t$, then the agent would be priced out by other bidders for at least some time steps.
Next, we show that if the size of the intervals $k$ is not too big (\ie if there exists a ``safe'' policy frequently enough), there exist the following policies.
\begin{restatable}{lemma}{newAssumption}\label{lm:new assumption}
    Suppose \cref{assumption new} holds with $k<\cumg/(2T\etab)$. Then, for $\delta>0$, there exists a $\delta$-safe and a $(\delta,\alpha/(1+\alpha),\OPT_{\bar\gamma})$-optimal policy.
\end{restatable}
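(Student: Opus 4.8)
The plan is to exhibit the two required policies separately, reusing the per-round arguments behind \Cref{lm:safeAdv,lm:optimal policy adv} but replacing the pointwise feasibility of $\safep$ by its windowed counterpart from \Cref{assumption new}. The structural fact I would use throughout is that the dual iterates move slowly: since $\lambda_{t+1}$ and $\mu_{t+1}$ are obtained from $\lambda_t,\mu_t$ by a projected gradient step with $|g_t(x_t)|\le 1$ and $|h_t(x_t)|\le 1$, and projections are nonexpansive, we have $|\lambda_{t+1}-\lambda_t|\le\etab$ and $|\mu_{t+1}-\mu_t|\le\etar$. Hence over any window of $k$ consecutive rounds the multipliers are nearly constant, drifting by at most $k\etab$ and $k\etar$ respectively.

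For the $\delta$-safe policy I would take $\safep$ from \Cref{assumption new} and fix any interval $\cI=[t_1,t_2]$, partitioning it into consecutive blocks of length $k$ (plus a shorter terminal remainder). On a block $\cB$ starting at $s$ I would write $\lambda_t=\lambda_s+(\lambda_t-\lambda_s)$ and $\mu_t=\mu_s+(\mu_t-\mu_s)$, so that, using $|g_t(\safep)|,|h_t(\safep)|\le 1$,
\[
\sum_{t\in\cB}\bigl(\lambda_t g_t(\safep)+\mu_t h_t(\safep)\bigr)\le \lambda_s\sum_{t\in\cB}g_t(\safep)+\mu_s\sum_{t\in\cB}h_t(\safep)+\sum_{t\in\cB}\bigl(|\lambda_t-\lambda_s|+|\mu_t-\mu_s|\bigr).
\]
By \Cref{assumption new} the windowed sums are at most $-\alpha k$, so the first two terms are bounded by $-\alpha k(\lambda_s+\mu_s)\le-\alpha k\,\mu_s$ (the budget contribution $-\alpha k\lambda_s\le 0$ is discarded); converting $k\mu_s$ back into $\sum_{t\in\cB}\mu_t$ costs a further drift term, so each full block contributes at most $-\alpha\sum_{t\in\cB}\mu_t+O\bigl((\etab+\etar)k^2\bigr)$. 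Summing over the $\le|\cI|/k$ blocks and bounding the remainder crudely by $k/\rho+\mu_{\cI}k$ gives
\[
\sum_{t\in\cI}\bigl(\lambda_t g_t(\safep)+\mu_t h_t(\safep)\bigr)\le -\alpha\sum_{t\in\cI}\mu_t+O\bigl((\etab+\etar)k\,|\cI|\bigr)+\tfrac{k}{\rho}+\mu_{\cI}k.
\]
It then remains to check that the error fits the slack $(\mu_{\cI}+1/\alpha)\cume$ of \Cref{def:safe policy}: the $\mu_{\cI}k$ term is absorbed whenever the interval is long enough that $k\le\cume$, and the drift term is $O(\etab k\,|\cI|)\le O(\etab k T)$, which the hypothesis $k<\cumg/(2T\etab)$ forces to be $\tilde O(\sqrt T)$ and hence within $(1/\alpha)\cume$; short intervals $|\cI|<k$, where $\cume$ is tiny and no full window fits, are instead bounded directly by the trivial $k/\rho+\mu_{\cI}|\cI|$ estimate.

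For the $(\delta,\alpha/(1+\alpha),\OPT_{\bar\gamma})$-optimal policy I would reuse the mixture $\hat\vpi_x=\safep_x/(1+\alpha)+\alpha\vpi^\ast_x/(1+\alpha)$ of \Cref{lm:optimal policy adv}, with $\vpi^\ast$ the optimal unconstrained policy. Condition (i) is immediate from $f_t\ge 0$: $\sum_t f_t(\hat\vpi)\ge\frac{\alpha}{1+\alpha}\sum_t f_t(\vpi^\ast)\ge\frac{\alpha}{1+\alpha}T\,\OPT_{\bar\gamma}$. For condition (ii) the key observation is that $\hat\vpi$ is feasible \emph{on every window}: summing $g_t(\hat\vpi)=\frac{1}{1+\alpha}g_t(\safep)+\frac{\alpha}{1+\alpha}g_t(\vpi^\ast)$ over a $k$-window and using $\sum g_t(\safep)\le-\alpha k$ together with $g_t(\vpi^\ast)\le 1$ yields $\sum_{t\in\cB}g_t(\hat\vpi)\le 0$, and likewise for $h$. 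I would then run exactly the block decomposition above — now the leading per-block term vanishes rather than being negative — to obtain $\sum_{t\in[t']}\bigl(\lambda_t g_t(\hat\vpi)+\mu_t h_t(\hat\vpi)\bigr)\le O\bigl((\etab+\etar)kT\bigr)+k/\rho+\mu_{[t']}k$, which the same $k$-hypothesis collapses into the allowed $(\mu_{[t']}+1/\alpha)\cumg$ of \Cref{def: good policy}.

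The main obstacle I anticipate is purely in the error accounting rather than in any new conceptual device: once the multipliers are no longer frozen within a window, the per-round cancellation used in \Cref{lm:safeAdv,lm:optimal policy adv} is replaced by a block-wise cancellation that leaks a drift error of order $(\etab+\etar)k\,|\cI|$, and the entire content of the hypothesis $k<\cumg/(2T\etab)$ is that this leakage stays sublinear and fits inside the structured slacks $(\mu_{\cI}+1/\alpha)\cume$ and $(\mu_{[t']}+1/\alpha)\cumg$. The delicate points to get right are the conversion between $k\mu_s$ and $\sum_{t\in\cB}\mu_t$ (which is what introduces the $\alpha$-dependence of the error) and the treatment of short intervals $|\cI|<k$, where no full window is available and one must fall back on the crude per-round bounds.
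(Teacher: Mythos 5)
Your proposal follows essentially the same route as the paper's proof: partition each interval into length-$k$ windows, use the fact that each OGD step moves the multipliers by at most $\etab$ (resp.\ $\etar$) so they are nearly frozen within a window, apply the windowed feasibility of \cref{assumption new} to the frozen multiplier, pay a per-window drift of order $k^2\etab$ that sums to $O(Tk\etab)\le\cumg$ under the hypothesis on $k$, and take the same mixture $\hat\vpi$ for the optimal policy. The only differences are bookkeeping: the paper's auxiliary lemma anchors at the per-window maximum and minimum multipliers (which avoids your conversion cost from $k\mu_s$ to $\sum_{t\in\cB}\mu_t$), and it bounds the mixture's Lagrangian term by plugging the already-established safe-policy inequality into the convex combination rather than re-running the block argument, while you are in fact more explicit than the paper about the remainder block and short intervals.
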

This allows us to balance the tightness of the required assumption with the final regret guarantees, by suitably choosing the learning rates rates $\etab$ and $\etar$. 
When \cref{assumption new} holds for $k=\log T$, we recover exactly the bounds of \Cref{thm:main adv}. As a further example, if $k=T^{1/4}$, then we can obtain regret guarantees of order $\tilde O(T^{3/4})$ by setting $\etab=O(T^{-3/4})$ and 
by suitably updating the definition of $\cumg$.
In the context of ad auctions, this allows us to make the milder assumption that the bidder sees an auction with ROI $>0$ at least every $k$ steps, instead of at every step.

\begin{algorithm}[tb]
    \caption{Primal regret minimizer.}
    \label{alg:primal regret minimizer}
 \begin{algorithmic}
    \STATE {\bfseries Input:} parameters $\eta>0,\xi>0,\sigma>0$
    \STATE {\bfseries Initialization:} $[0,1]^{\nval\times\nbid} \ni \vw_{1}\gets \vone$ 
    \FOR{$t = 1, 2, \ldots , T$}
    \STATE
    \begin{OneLiners}
     \item {\bfseries Observe} valuation $v_t\in\cV$
     \item {\bfseries Set} $\vpi(v_t)_x \gets w_{t,v_t,x} / \sum_{x'\in\cX} w_{t,v_t,x'},\,\,\forall x\in \cX$
     \item {\bfseries Bid} $x_t \sim \vpi(v_t)$
     \item {\bfseries Observe} loss $\lossp(x_t)$
     \item $\tilde \lossp(x)\gets \lossp(x)\indicator{x=x_t}/(\vpi(v_t)_x+\xi)$ $\forall x\in\cX$
     \item For each $x\in\cX$, set $w_{t+1,v_t,x} \gets (1-\sigma) w_{t,v_t,x}\cdot e^{-\eta\tilde \lossp(x)} 
     + \frac{\sigma}{\nbids} \sum_{x'\in\cX} w_{t,v_t,x'} \cdot e^{-\eta\tilde \lossp(x')}$
 \end{OneLiners}
    \ENDFOR
 \end{algorithmic}
 \end{algorithm}

\section{Bidding in Repeated Non-Truthful Auctions}\label{sec:app}

In automatic bidding systems advertisers usually have to specify some parameters like their overall budget and their targeting criteria. Then a \emph{proxy bidder} operated by the platform places bids on their behalf. 
A popular autobidding strategy is \emph{value maximization} subject to budget and ROI constrains \citep{auerbach2008empirical,golrezaei2021auction,deng2023multi}.
Recently, many advertising platforms have been transitioning from the second-price auction format toward a first-price format (see, \eg \citep{admanager,wong21firstprice}), which is \emph{not} truthful. %
In this context, existing results for truthful auctions cannot be applied~\citep{balseiro2019learning,feng2022online}.

We show that our framework can be used to manage bidding in repeated non-truthful auctions under budget and ROI constraints. We will focus on the case of repeated first-price auctions, and we will make the simplifying assumption of having a finite set of possible valuations and bids. 
In \Cref{app:app}, we provide further details on this application and also apply the framework to second-price auctions with continuous bids and valuations, extending the stochastic analysis by \citet{feng2022online} to adversarial settings.

\xhdr{Set-up.} 
At each round $t\in [T]$, the bidder observes their valuation $v_t$ extracted from a finite set $\cV\subset [0,1]$ of $n$ possible valuations. The set $\cX\subset [0,1]$ is a set of $\nbid$ possible bids. Let $\beta_t$ be the highest-competing bid at time $t$. In the value-maximizing utility model for each $t$ we have $f_t(x_t)\defeq v_t \indicator{x_t\ge \beta_t}$ \citep{babaioff2021non,balseiro2021landscape},  and the cost function is $c_t(x_t)\defeq x_t\indicator{x_t\ge \beta_t}$, where the indicator function specifies whether the bidder won the auction at time $t$.
We extend the definition of policies from \Cref{sec:baselines} to model \emph{randomized bidding policies}. Each  $\vpi\in\Pi$ is now a mapping $\vpi:\cV\rightarrow \Delta_\cX$. We denote by $\vpi(v)_x$ the probability of selecting $x$ under valuation $v$.

\xhdr{Primal regret minimizer.} %
Our primal regret minimizer is based on the  EXP3-SIX algorithm by~\citet{neu2015explore} and it is described in \Cref{alg:primal regret minimizer}.
At each round $t$, the algorithm maintains a set of weights $\vw_t\in[0,1]^{n\times m}$. The probability of playing $x$ under valuation $v_t$ is proportional to the weight $w_{t,v_t,x}$. 
After drawing $x_t$, the algorithm observes $\lossp(x_t)$ and builds the estimated loss $\tilde\lossp$, where $\xi>0$ is the implicit exploration term. Then, the update of weights $\vw$ is inspired by the Fixed Share algorithm by \citet{herbster1998tracking}.
We start by showing \cref{eq:interval regret} holds in the single-valuation setting.

\begin{restatable}{theorem}{primalreg}\label{thm:adaptive regret primal short}
    Let $\nval=1$, $\eta\defeq 1/\sqrt{\nbid T}$, $\xi\defeq 1/(2\sqrt{\nbids T})$, $\sigma\defeq 1/T$. For any $\delta>0$, \emph{EXP3-SIX} guarantees that, w.p. at least $1-\delta$, for any interval $\cI$, and for any $x\in\cX$, 
    \[
        \textstyle \sum_{t\in \cI} \mleft( \lossp(x_t) - \lossp(x)\mright)\le O\mleft(\sqrt{mT}\log\mleft(\frac{\nbids T}{\delta}\mright)\mright).
    \]
\end{restatable}
Then, if we instantiate one independent instance of EXP3-SIX for each valuation in $\cV$ with the choice of parameters of \cref{thm:adaptive regret primal short}, we have that for any time interval $\cI$ the regret accumulated by \Cref{alg:primal regret minimizer} over $\cI$ is upper bounded by $  M^2_\cI\,\sqrt{\nval}\,\cump[T,\delta]$ with probability at least $1-\nval\delta$ (see \citet[Chapter 18.4]{lattimore2020bandit}). It follows that \Cref{eq:interval regret} is satisfied and the guarantees of \Cref{thm:main stoc,thm:main adv} readily apply to the problem of bidding in repeated first-price auctions under budget and ROI constraints.

\bibliography{refs}
\bibliographystyle{icml2024}

\newpage
\appendix
\onecolumn

\section{Further Related Works}\label{sec:apprelated}

We survey the most relevant works with respect to ours. For further background on online learning the reader can refer to the monograph by \citet{cesa2006prediction}.

\xhdr{1) Bandits with Knapsacks.} The stochastic \emph{Bandits with Knapsacks} (BwK) framework was introduced and first solved by~\citet{Badanidiyuru2018jacm}.
Other regret-optimal algorithms for stochastic BwK have been proposed by \citet{agrawal2019bandits}, and by \citet{immorlica2019adversarial}.
The BwK framework has been subsequently extended to numerous settings such as, for example, more general notions of resources and constraints~\citep{agrawal2014bandits,agrawal2019bandits}, contextual bandits~\citep{dudik2011efficient,badanidiyuru2014resourceful,agarwal2014taming,agrawal2016efficient}, and combinatorial semi-bandits \citep{sankararaman2018combinatorial}.
Moreover, the BwK framework has been employed to model various applications with budget/supply constraints such as, for example, dynamic pricing \citep{besbes2009dynamic,besbes2012blind,wang2014close}, dynamic procurement \citep{badanidiyuru2012learning}, and dynamic ad allocation \citep{combes2015bandits, balseiro2019learning}.
The \emph{Adversarial Bandits with Knapsacks} (ABwK) setting was first studied by \citet{immorlica2019adversarial}, who proved a $O(m\log T)$ competitive ratio for the case in which the sequence of rewards and costs is chosen by an oblivious adversary.
\citet{immorlica2019adversarial} also show that no algorithm can achieve a competitive ratio better than $O(\log T)$ on all problem instances, even in instances with only two arms and a single resource. 
Recently, \citet{kesselheim2020online} refined the analysis for the general ABwK setting to obtain an $O(\log m\,\log T)$ competitive ratio. They also prove that such competitive ratio is optimal up to constant factors.
Moreover, \citet{Castiglioni2022Online} proved a constant-factor competitive ratio in the regime $B=\Omega(T)$.
We mention that further results have been obtained in the simplified setting with one constrained resource \cite{rangi2018unifying,tran2010epsilon,tran2012knapsack}.
All the works mentioned in this paragraph can only handle packing constraints (\eg budget constraints). They cannot handle ROI constraints, and they need perfect knowledge of the feasibility parameter $\alpha$.

\xhdr{2) Online packing problems.} Various well-known online packing problems can be seen as special cases of ABwK, with a more permissive feedback model which allows the decision maker to observe the full feedback before choosing an action (see, e.g., \citet{Buchbinder2009design,devanur2011near}). 
In online packing settings, since the decision maker is endowed with more information at the time of taking decisions, it is possible to derive $O(\log T)$ competitive ratio guarantees against the optimal dynamic policy. 
In the context of online allocation problems with fixed per-iteration budget,~\citet{balseiro2020dual,balseiro2022best} propose a class of algorithms which attain asymptotically optimal performance in the stochastic case, and they attain an asymptotically optimal (parametric) constant-factor competitive ratio when the inputs are adversarial. In their setting, as we already mentioned, at each round the input $(f_t,c_t)$ is observed by the decision maker \emph{before} they make a decision. This makes the problem essentially different from ours.
Even in this case, these works cannot handle problems with ROI-constrained decision makers, since they can handle only packing constraints, and require knowledge of the feasibility parameter.

\xhdr{3) Online convex optimization with time-varying constraints.} Another line of related work concerns online convex optimization with time-varying constraints (see, \emph{e.g.}, \cite{mahdavi2012trading,mahdavi2013stochastic,jenatton2016adaptive,neely2017online,chen2018bandit,castiglioni2022unifying}), where it is usually assumed that the action set is a convex subset of $\R^m$, in each round rewards (resp., costs) are concave (resp., convex), and most importantly, resource constraints only apply at the last round. In contrast, in our setting, budget constraints apply in all rounds. Moreover, guarantees are usually provided either for stochastic constraints \cite{yu2017online,wei2020online}, or for adversarial constraints \cite{mannor2009online,sun2017safety,liakopoulos2019cautious}, typically by employing looser notions of regret. In contrast, our framework will provide best-of-both-worlds guarantees. Moreover, these frameworks typically require perfect knowledge of $\alpha$ ad \Cref{assumption adv} to hold, while our framework relaxes both assumptions.

\xhdr{4) Bidding in repeated auctions.} The problem of online bidding in repeated auctions has been extensively studied using online learning approaches (see, \eg \citet{borgs2007dynamics,weed2016online,nedelec2022learning}). 
In particular, online bidding under budget constraints has been studied in various settings. \citet{balseiro2019learning} and \citet{ai2022no} focus on utility-maximizing agents with one resource-consumption constraint. In the context of online allocation problems with an arbitrary number of constraints, \citet{balseiro2020dual,balseiro2022best} propose a class of primal-dual algorithms attaining asymptotically optimal performance in the stochastic and adversarial case. In their setting, at each round, the input $(f_t,c_t)$ is observed by the decision maker \emph{before} they make a decision. This makes the problem substantially different from ours. In particular, their framework cannot handle non-truthful repeated auctions. 
Recent works have also examined settings similar to ours, involving bidders with constraints on their budget and ROI. The framework by \citet{feng2022online} can handle ROI and ``hard'' budget constraints, but crucially relies on truthfulness of second-price auctions, and on the stochasticity of the environment.
The framework by \citet{castiglioni2022unifying} allows for general ``soft'' constraints under both stochastic and adversarial inputs. Their framework cannot be applied in our setting for three reasons: i) we have hard budget constraints, ii) we don't make the stringent assumption of knowing the parameter $\alpha$ beforehand, and iii) we relax the assumption of having one strictly feasible solution for each round in the adversarial setting. 
Finally, \citet{golrezaei2021bidding} studies the dynamic pricing problem faced by a seller who repeatedly sells items to a single budget and ROI constrained buyer.

\section{Further Details on the Example of \Cref{sec:example}}\label{app:example}

We have to set the length of $\cI_1$, $\cI_2$, and $\cI_3$ so that the primal and dual regret are $\le 0$, while the constraint violations are $\Omega(T)$.
We observe that, by construction, the two candidate actions for being the best in hindsight in the primal problem are bidding either $0$ or $1/2$. Therefore, we start by computing the primal regret with respect those two actions. For simplicity, we drop the rescaling factor from the definition of $\lossp$ since that is only needed for technical reasons in the new construction, and write: 
\begin{equation*}
    \lossp(b_t)= -f_t(b_t) + \mu_t h_t(b_t)= (c_t(b_t)-v_t) + \mu_t\,(2c_t(b_t)-v_t).
\end{equation*}
Then, the regret with respect to bid $1/2$ is
\[
\cump[T](1/2)=\sum_{t=1}^T\mleft(\lossp(b_t)-\lossp(1/2)\mright)=-\frac{\li{1}}{2} - \frac{\li{2}}{16} + 16\li{3} + \frac{\li{1}}{2} - \frac{\li{2}}{4}  - \frac{49\li{3}}{4}=-\frac{5\li{2}}{16}+\frac{15\li{3}}{4},
\]
and the regret with respect to bid $0$ is
\[
\cump[T](0)=\sum_{t=1}^T\mleft(\lossp(b_t)-\lossp(0)\mright)=-\frac{\li{1}}{2} - \frac{\li{2}}{16} + 16\li{3}+\frac{\li{1}}{16}+\frac{\li{2}}{16}+\frac{17\li{3}}{16}=-\frac{7\li{1}}{16}+\frac{273\li{3}}{16}.
\]
We observe that $\cump[T](0)\le 0$ for $\li{1}$ big enough. Interval $\cI_1$ can arbitrarily long since during the first phase the ROI violation is 0, so it does not impact on the constraints. In particular, we can set $\li{1}=39\li{3}$. Moreover, we observe that $\cump[T](1/2)\le 0$ whenever $\li{2}\ge 12\li{3}$. By setting $\li{2}= 12\li{3}$, we obtain $\li{3}=T/52$. 

The sequence of dual multipliers $\mu_t$ depicted in \Cref{tab:example} guarantees no-regret on the dual problem since the dual player is exactly best responding to the primal actions: $\mu_t= 0$ when violations are $\le 0$, and $\mu_t=1/\alpha=16$ when violations are strictly positive. We observe that in this example $\alpha=\frac{1}{16}$ since bidding $0$ yields a strict feasibility gap of at least $1/16$. Therefore, as expected, the two best-response actions of the dual player are the two extreme points of the interval $[0,1/\alpha]$ (see \Cref{sec:standard primal dual}).  

Now, we can write the cumulative violation as a function of $T$. We have
\[
\sum_{t=1}^T h_t(b_t)=\sum_{t=1}^T(2c_t(b_t)-v_t)=-\frac{1}{16}\li{2}+\li{3}=\frac{1}{208}T=\Omega(T).    
\]
Therefore, ROI constraint violations grow linearly in $T$, thereby violating one of our desiderata.

\section{Proofs for \cref{sec:primal dual}}

\growthMu*
\begin{proof}
    We prove the result by induction. 
    Fix a starting point $t_1\in [T]$.
    First, it's easy to see that the result holds for $t_2=t_1$.
    Then, suppose that the statement holds for round $t_2=t$. Then,
    \begin{align*}
        \mu_{t+1}&= \mleft[ \mu_t + \etar\, h_t(x_t)\mright]^{+}\\
        &\ge \mu_t + \etar\, h_t(x_t)\\
        &\geq \etar \sum_{t' \in [t_1,t-1]} \, h_{t'}(x_{t'}) + \mu_{t_1}+ \etar\, h_{t}(x_{t})\\
        &= \etar \sum_{t' \in [t_1,t]}  h_{t'}(x_{t'}) + \mu_{t_1},
    \end{align*}
    where $[x]^+\defeq\max\{x,0\}$.
    This implies that the statement holds for $t_2=t+1$.
    This concludes the proof.
\end{proof}

\section{Proofs for \cref{sec: bound multipliers}}

\lemmaBoundLagrangian*

\begin{proof}
    We consider two cases. 

    \xhdr{Case 1:} $\alpha\le 10/\sqrt{T}$. By construction of the dual regret minimizer, and by the choice of $\etar$, the dual variable $\mu_t$ can reach at most value $\etar T\le \sqrt{T}/16$. Therefore, we have $\mu_t\le\sqrt{T}/16<2/\alpha$.

    \xhdr{Case 2:} $\alpha> 10/\sqrt{T}$. Let $\cI=[t_1,t_2]$, with $t_1,t_2\in [T]$, $t_1\le t_2$. Moreover, assume that there exists a safe policy $\safep$. We show that, if the Lagrangian multiplier $\mu_t$ is greater than $2/\alpha$, we reach a contradiction.

    Suppose, by contradiction, that there exists a round $t_2$ such that $\mu_{t_2}\ge 2/\alpha$.
    Let $t_1$ be the the first round such that $\mu_t\ge 1/\alpha$ for any $t\in[t_1,t_2]$. 
    Notice that the structure of the dual regret minimizer (see \cref{sec:dual rm}) implies that
    \begin{equation}\label{eq:mu t1 mu t2}
        \mu_{t_1}\le 1/\alpha+\etar \quad\text{\normalfont and }\quad \mu_{t_2}\le 2/\alpha+\etar,
    \end{equation} 
    since the dual losses are in $[-1,1]$.
    Therefore, we can upperbound the primal loss function as $M_{[T]}\le (1+4/\alpha)$ (\ie we use $\mu_t\le 3/\alpha$).
    This implies that, for $m\ge 2$, 
    \[\eta M_{[T]}=\frac{1}{\sqrt{mT}}\mleft(1+\frac{4}{\alpha}\mright)<\frac{1}{\sqrt{mT}}\mleft(1+\frac{2\sqrt{T}}{5}\mright)\le1.
    \]
    Therefore, the primal regret minimizer satisfies the bound on the adaptive regret of \cref{eq:interval regret}.
Then, by the no-regret property of the primal we get:
    \begin{align*}
        &\sum_{t\in\cI} \mleft(f_t(x_t) - \lambda_t g_t(x_t)- \mu_t h_t(x_t)\mright)\\ &\hspace{2cm}\ge \sum_{t\in\cI} (f_t(\safep)-\lambda_t g_t(\safep)- \mu_t h_t(\safep)) - M^2_{\cI} {\cump[T,\delta]} \\
        &\hspace{2cm}\ge \alpha  \sum_{t\in\cI}  \mu_t - \mleft(\mu_{\cI}+\frac{1}{\alpha}\mright) \cume -  M^2_{\cI} {\cump[T,\delta]} & \text{\normalfont (by \cref{def:safe policy})} \\
        &\hspace{2cm} \ge (t_2-t_1) - \mleft(\mu_{[t_1,t_2-1]}+\frac{3}{\alpha}+\etar\mright) \cume  -  M^2_{\cI} {\cump[T,\delta]} & \text{\normalfont (by Def. of $t_1$ and \cref{eq:mu t1 mu t2})}\\
        &\hspace{2cm} \ge(t_2-t_1) -  \mleft(\frac{5}{\alpha}+\etar\mright) \cume -  \mleft(1+\frac{2}{\alpha} + \frac{1}{\alpha}\mright)^2 {\cump[T,\delta]} & \text{\normalfont (by Def. of $M_\cI$ and $\lossp$)}\\
        &\hspace{2cm} \ge (t_2-t_1) - \mleft(\frac{5}{\alpha}+\etar\mright) \cume -  \frac{16}{\alpha^2} {\cump[T,\delta]}.\numberthis\label{eq:intermediate1}
    \end{align*}

Since the Lagrangian multipliers $\mu_t$ is always at least $1/\alpha$ for $t\in [t_1,t_2]$, the dual regret minimizer never has to project over $\R_{\ge 0}$ during interval $[t_1,t_2]$. In particular, projecting the dual multiplier at $t$ back onto $\R_{\ge 0}$ would yield $\mu_t=0$. This cannot happen for $t\in [t_1,t_2]$, since $\mu_t\ge 1/\alpha$. Then, since the dual regret minimizer does not perform any projection operation during $[t_1,t_2]$, we have that the statement of \cref{lm:ogd} holds with equality:
\[
    \mu_{t_2}= \etar \sum_{t'\in[t_1,t_2-1]} h_{t'}(x_{t'})+\mu_{t_1}.
\]
Hence, by definition of $t_2$ and \cref{eq:mu t1 mu t2},
\[
    \sum_{t'\in[t_1,t_2-1]} h_{t'}(x_{t'})= \frac{\mu_{t_2}-\mu_{t_1}}{\etar}\ge  \frac{1}{\alpha \etar} -1.
\]

Then, by the regret bound of the dual with respect to $\mu=\mu_t$ and $\lambda=0$, we get
\begin{align*} 
\sum_{t \in [t_1,t_2-1]}\mleft(f_t(x_t)-\lambda_t g_t(x_t)- \mu_t h_t(x_t)\mright) &\le \sum_{t \in [t_1,t_2-1]}\mleft(f_t(x_t)- \mu_{t} h_t(x_t)\mright)+  \cumdr + \cumdb\\
&\le (t_2-t_1) - \frac{1}{\alpha}  \sum_{t\in[t_1,t_2-1]} h_{t}(x_t) +  \cumdr + \cumdb\\
&\le (t_2-t_1) -  \frac{1}{\alpha^2 \eta}+ \frac{1}{\alpha} + \cumdr + \cumdb.\numberthis\label{eq:intermediate2}
\end{align*}

By putting \cref{eq:intermediate1} and \cref{eq:intermediate2} together we have that
\[
    (t_2-t_1) -  \frac{1}{\alpha^2 \etar}+ \frac{3}{\alpha} +  2 + \cumdr + \cumdb \ge  (t_2-t_1) - \mleft(\frac{5}{\alpha}+\etar\mright) \cume -  \frac{16}{\alpha^2} {\cump[T,\delta]}. 
\]
We observe that in \cref{lm:dual rm} we set $\etar\defeq \mleft(6+\cumdr+\cumdb+6\cume+16\cump\mright)^{-1}$. Then, from the inequality above we have
\[
    \frac{1}{\alpha^2 \etar} \le  \frac{3}{\alpha} + 2 + \cumdr + \cumdb  + \mleft( \frac{5}{\alpha}+1\mright)\cume + \frac{16}{\alpha^2} \cump[T,\delta].
\]
However, we reach a contradiction since 
\begin{align*}
    \frac{1}{\alpha^2\etar}&\ge \frac{4}{\alpha} + 2 +\cumdr+\cumdb+\mleft(\frac{5}{\alpha}+1\mright)\cume +\frac{16}{\alpha^2}\cump \\&>\frac{3}{\alpha} + 2 + \cumdr + \cumdb  + \mleft( \frac{5}{\alpha}+1\mright)\cume + \frac{16}{\alpha^2} \cump,
\end{align*}
where we used the fact that $\alpha\in (0,1]$ by assumption ($\alpha>0$), and by boundedness of $g_t$ and $h_t$ for all $t\in [T]$.
This concludes the proof.
\end{proof}

\LemmaSafeStoc*

\begin{proof}
    By the definition of $\alpha$, there exists a policy $\vpi$ such that $\E_\distr g(\vpi)\le -\alpha$ and $\E_\distr h(\vpi) \le - \alpha$. 
    Then, given a time interval $\cI=[t_1,t_2]$, $t_1,t_2 \in [T]$, by appling the Azuma–Hoeffding inequality to the martingale difference sequence $W_1,\ldots,W_T$ with  
    \[
       W_t\defeq\lambda_t g_t(\vpi) + \mu_t h_t(\vpi) - \lambda_t \E_\distr g(\vpi)-\mu_t \E_\distr h(\vpi),
    \]    
    we obtain that
    \[
        \mleft|\sum_{t\in\cI} W_t\mright| \le \mleft(\mu_{\cI}+ \frac{1}{\alpha}\mright) \sqrt{2(t_2-t_1) \log\mleft(\frac{2}{\delta}\mright)}
    \]
    holds with probability at least $1-\delta$.
    By applying a union bound we get that the inequalities for each time interval $\cI$ hold simultaneously with probability at least $1-T^2\delta$. Let $\cume\defeq 2\sqrt{(t_2-t_1)\log\mleft(\frac{2T}{\delta}\mright)}$ as per \cref{def:safe policy}.
    Then, with probability at least $1-\delta$ it holds 
    \begin{align*}
    \sum_{t \in \cI} \mleft(\lambda_t g_t(\vpi) + \mu_t h_t(\vpi) \mright) & \le \mleft(\mu_{\cI}+ \frac{1}{\alpha}\mright) \cume + \sum_{t \in \cI} \mleft( \lambda_t \E_\distr g(\vpi) + \mu_t \E_\distr h(\vpi)\mright) \\
    &\le \mleft(\mu_{\cI}+ \frac{1}{\alpha}\mright) \cume - \alpha \sum_{t \in \cI} \mleft( \lambda_t + \mu_t\mright) \\
    &\le \mleft(\mu_{\cI}+ \frac{1}{\alpha}\mright) \cume - \alpha \sum_{t \in \cI} \mu_t.
    \end{align*}
    This concludes the proof.
\end{proof}

\LemmaSafeAdv*

\begin{proof}
    By assumption there exists a policy $\vpi$ such that $g_t(\vpi)\le -\alpha$ and $h_t(\vpi)\le -\alpha$ for each $t \in [T]$. Then, for each $t_1,t_2 \in [T]$, with $t_1<t_2$, it holds
    $\sum_{t \in [t_1,t_2]} \mleft( \lambda_t g_t(\vpi) + \mu_t h_t(\vpi)\mright) \le -\alpha \sum_{t \in [t_1,t_2]} \mleft( \lambda_t + \mu_t \mright)\le -\alpha \sum_{t \in [t_1,t_2]}  \mu_t$, which implies that $\vpi$ is $\delta$-safe for any $\delta\in[0,1]$. 
\end{proof}

\section{Proofs for \Cref{sec:guarantees}}

\lmEvent*

\begin{proof}
    By Lemmas~\ref{lm:safeAdv} and~\ref{lm:safeStoc}, we have that in both settings there exists a safe policy with probability at least $1-\delta$. Moreover, by \Cref{eq:interval regret} with probability at least $1-\delta$ the regret of the primal is upperbounded by $M_{\cI} \cump$ for each interval $\cI=[t_1,t_2]$,   $t_1,t_2 \in [T]$. Applying a union bound sufficies to show that the two events hold simultaneously with probability at least $1-2\delta$.
    Then, the statement directly follows from \cref{lm:boundLagrangian}.
\end{proof}

\hbound*
\begin{proof}
    By the definition of event $\CE$ we have that $\mu_\tau\le 2/\alpha$.
    Moreover, by \Cref{lm:ogd} it holds that $\mu_{\tau} \ge \etar\sum_{t \in [\tau-1]} h_t(\vpi_t)$.
    Hence, $\sum_{t \in [\tau]} h_t(\vpi_t) \le \mu_{\tau}/\etar +1 \le 2/(\etar\alpha)+1$.
\end{proof}

\lemmaBeforeEnd*

\begin{proof}
    We consider two cases.
    \begin{OneLiners}
        \item If $\tau =T$, then  
        \[
            \sum_{t \in [\tau]} \lambda_t g_t(x_t)\ge -\cumdb \ge T-\tau-\frac{1}{\rho}-\cumdb.
        \]
        \item Otherwise, if $\tau<T$,
        \begin{align*}
            \sum_{t \in [\tau]} \lambda_t g_t(x_t) & \ge\frac{1}{\rho}\sum_{t \in [\tau]} g_t(x_t)-\cumdb[\tau] 
             = \frac{1}{\rho} \sum_{t \in [\tau]}  \mleft( c_t(x_t) -\rho \mright) - \cumdb[\tau]\\
            & = \frac{1}{\rho} \mleft(B-1 -\tau\rho\mright) - \cumdb[\tau]\\
            & = \mleft(T-\tau-\frac{1}{\rho}\mright) - \cumdb[\tau].
        \end{align*}
        where the first inequality follows by the no-regret guarantee of the dual regret minimizer with respect to the fixed choice of $\lambda= 1/\rho$, and then we use the definition of $g_t$ and the fact that $\tau$ is the time at which the budget is depleted, that is the round in which the available budget becomes strictly smaller than 1 (see \cref{alg:meta alg}).
    \end{OneLiners}
    This concludes the proof.
\end{proof}

\lemmaRegretUnif*

\begin{proof}
    Let $\vpi^*$ be a $(\delta,q,\OPT)$-optimal policy. 
    Then, we have that  
    \begin{align*}
    \sum_{t \in [\tau]} f_t(x_t) &\ge \sum_{t \in [\tau]} \mleft( f_t(\vpi^*)-\lambda_t g_t(\vpi^*)-\mu_t h_t(\vpi^*) +\lambda_t g_t(x_t)+\mu_t h_t(x_t) \mright) - \mleft(\frac{3}{\alpha}+1\mright) \cump[\tau,\delta]  \\
    &\ge\sum_{t \in [\tau]} \left( f_t(\vpi^*)+\lambda_t g_t(x_t)+\mu_t h_t(x_t) \right) - \frac{3}{\alpha}\cumg[\tau,\delta] -\mleft(\frac{3}{\alpha}+1\mright) \cump[\tau,\delta]\\
    &\ge\sum_{t \in [\tau]} f_t(\vpi^*)+   \sum_{t \in [\tau]} \lambda_t g_t(x_t) - \frac{3}{\alpha}\cumg[\tau,\delta] -\mleft(\frac{3}{\alpha}+1\mright) \cump[\tau,\delta] -\cumdr[\tau]\\
    & \ge \sum_{t \in [\tau]} f_t(\vpi^*)+ T-\tau-\frac{1}{\rho} - \frac{3}{\alpha}\cumg[\tau,\delta] -\mleft(\frac{3}{\alpha}+1\mright) \cump[\tau,\delta] -\cumdr[\tau]-\cumdb\\
    & \ge \sum_{t \in [T]} f_t(\vpi^*)  -\frac{1}{\rho} - \frac{3}{\alpha}\cumg[\tau,\delta] -\mleft(\frac{3}{\alpha}+1\mright) \cump[\tau,\delta] -\cumdr[\tau]-\cumdb \\
    & \ge q T\OPT - \frac{1}{\rho}  -\mleft(\frac{3}{\alpha}+1\mright) \mleft(\cump[T,\delta]+\cumg\mright) -\cumdr-\cumdb,
    \end{align*}
    where the first inequality comes from the regret bound of the primal regret minimizer, the second follows by the definition of $(\delta,q,\OPT)$-optimal policy, the third follows by the no-regret guarantee of the dual regret minimizer with respect to action $\mu=0$, the fourth one follows from Lemma~\ref{lm:highG}. Finally, the fifth inequality follows from the fact that $f_t(\cdot)\in [0,1]$, and the last one is by definition of $(\delta,q,\OPT)$-optimal policy. This proves our statement.
\end{proof}

\LemmaOptStoc*

\begin{proof}
    Let $\vpi^*$ be an optimal solution to $\LP_\gamma$.
    We show that, with probability at least $1-\delta$, the policy $\vpi^*$ is $(\delta,1,\OPT_\gamma)$-optimal, proving the statement.

    First, by Azuma–Hoeffding inequality we have that, for $t'\in [T]$, with probability at least $1-\delta$
    \[
        \sum_{t \in [t']} \mleft(\lambda_t g_t(\vpi^*) + \mu_t h_t(\vpi^*) - \lambda_t \E_\distr g(\vpi^*)-\mu_t \E_\distr h(\vpi^*)\mright)\le \mleft(\mu_{[t']}+\frac{1}{\alpha}\mright)\sqrt{2T\log\mleft(\frac{1}{\delta}\mright)},
    \]
    where $\mu_{[t']}$ is the largest dual multiplier $\mu_t$ observed up to $t'$. Notice that we cannot upper bound it righ away as $2/\alpha$ because here we are not requiring event $\CE$ (\Cref{def:event}) to hold. Then, assuming $T>2$, by taking a union bound over all possible rounds $t'$, we get that the following inequality holds with probability at least $1-\delta$ simultaneously for all $t'\in [T]$,
    \begin{multline*}
        \sum_{t \in [t']} \mleft(\lambda_t g_t(\vpi^*) + \mu_t h_t(\vpi^*) - \lambda_t \E_\distr g(\vpi^*)-\mu_t \E_\distr h(\vpi^*)\mright)\le \mleft(\mu_{[t']}+\frac{1}{\alpha}\mright)\sqrt{2T\log\mleft(\frac{T}{\delta}\mright)}\le \mleft(\mu_{[t']}+\frac{1}{\alpha}\mright) \cumg.
    \end{multline*}

    Similarly, we can prove that 
    \[ \mleft| \sum_{t \in [T]} \mleft(f_t(\vpi^*) -  \E_\distr f(\vpi^*)\mright)\mright| \le 2 \sqrt{T \log\mleft(\frac{2T}{\delta}\mright)}=\cumg \]
    holds with probability at least $1-\delta$. Then,
    \[
        \sum_{t \in [T]} f_t(\vpi^*)\ge \sum_{t \in [T]}\E_\distr f(\vpi^*) -\cume= \OPT_\gamma - \cume[T,\delta][].
    \]
    Assuming $T>2$ and applying an union bound, the statement follows.
\end{proof}

\LemmaOptAdv*

\begin{proof}
    Let $\safep$ be a strictly feasible policy such that $\alpha=-\max_{t \in [T]}\max\mleft\{  g_t(\safep), h_t(\safep) \mright\}$, with $\alpha>0$,  and let $\vpi^*\in \argmax_{\vpi\in\Pi} \sum_{t \in [T]} f_t(\vpi)$ be an optimal unconstrained policy.
    It holds $\sum_{t \in [T]} f_t(\vpi^*)\ge T \OPT_{\bar \gamma}$ since the optimal uncostrained policy is better than the optimal constrained policy, which is a solution to $\LP_{\bar \gamma}$.

    Then, consider the policy $\hat\vpi$ such that, for each $v\in\cV,b\in\cB$,
    \[
    \hat\vpi(v)_b= \frac{1}{1+\alpha} \safep(v)_b + \frac{\alpha}{1+\alpha}\vpi^*(v)_b,
    \]
    where, given a policy $\vpi$, we denote by $\vpi(v)_b$ the probability of bidding $b$ under valuation $v$. 
    
    At each iteration we have that both the budget and the ROI constraints are satisfied by the policy $\hat\vpi$ (in expectation with respect to $\hat\vpi$).
    Indeed, for each $t \in [T]$, we have thath $g_t(\hat \vpi)= \frac{1}{1+\alpha} g_t( \safep)+\frac{\alpha}{1+\alpha} g_t( \vpi^\ast)\le \frac{-\alpha}{1+\alpha}+\frac{\alpha}{1+\alpha}\le 0$.   Similarly, we can prove that for each $t \in [T]$ it holds $h_t(\hat \vpi)\le 0$.
    Then, the policy $\hat\vpi$ satisfies the condition $\sum_{t \in [t']} ( \lambda_t g_t(\hat\vpi) + \mu_t h_t(\hat\vpi))\le 0$ for each $t' \in [T]$.
    Moreover,
    \begin{align*}
    \sum_{t \in [T]} f_t(\hat \vpi)&=\sum_{t \in [T]} \mleft(\frac{1}{1+\alpha}f_t(\safep)+\frac{\alpha}{1+\alpha}f_t( \vpi^*) \mright) \\&\ge \sum_{t \in [T]} \frac{\alpha}{1+\alpha}f_t( \vpi^*) \\& \ge  \frac{\alpha}{1+\alpha} \OPT_{\bar\gamma},
    \end{align*}
    which satisfies the first condition of \Cref{def: good policy}.
    This concludes the proof.
\end{proof}

\section{Proofs for \Cref{sec:generalization}}
\newAssumption* 
\begin{proof}
    First, we need to show that there exists a $\delta$-safe policy. In particular, we show that there exists a policy $\safep$ such that, for any time interval $\cI=[t_1,t_2]$, it holds 
    \begin{equation}\label{eq:safe modified}
        \sum_{t \in \cI}\mleft( \lambda_t g_t(\safep)  +\mu_t h_t(\safep)\mright)\le \cumg-\alpha \sum_{t \in \cI} (\mu_t+\lambda_t).
    \end{equation}
    To do that, we show that the interval $\cI$ can be split in smaller intervals of length $k$, and for each of such smaller intervals $\cI'$, it holds
    \begin{equation*}\label{eq:small interval}
        \sum_{t \in \cI'}\mleft( \lambda_t g_t(\safep)  +\mu_t h_t(\safep)\mright)\le 2 k^2 \eta_B- \alpha \sum_{t \in \cI'}(\mu_t+\lambda_t).
    \end{equation*}
    We show that this holds for any $\cI'$ of length $k$ in \cref{lm:aux}.
    Then, the cumulative sum on the original interval $\cI$ is at most 
    \begin{align*}
    \sum_{t \in \cI}\mleft( \lambda_t g_t(\safep)  +\mu_t h_t(\safep)\mright)&\le \mleft\lceil\frac{|\cI|}{k}\mright\rceil\mleft(2 k^2 \eta_B- \alpha \sum_{t \in \hat\cI}(\mu_t+\lambda_t)\mright)\\&\le 2Tk\etab -\alpha\sum_{t\in\cI}(\mu_t+\lambda_t)\le \cumg-\alpha\sum_{t\in\cI}(\mu_t+\lambda_t),
    \end{align*}
    where we set $\hat\cI\in\argmax_{\cI':|\cI'|=k} \sum_{t\in\cI'}(\mu_t+\lambda_t)$. This shows that \cref{eq:safe modified} holds for any interval $\cI$.

    Then, we can show that a $\mleft(\delta,\alpha/(1+\alpha),\OPT_{\bar\gamma}\mright)$-optimal policy exists. In particular, by defining a policy $\hat\vpi$ as in the proof of \cref{lm:optimal policy adv}, we have
    \begin{align*}
        \sum_{t \in \cI} ( \lambda_t g_t(\hat\vpi) + \mu_t h_t(\hat\vpi))  &= \frac{1}{1+\alpha}\mleft(\sum_{t \in \cI} ( \lambda_t g_t(\safep) + \mu_t h_t(\safep))\mright)+\frac{\alpha}{1+\alpha}\mleft(\sum_{t \in [t']} ( \lambda_t g_t(\vpi^*) + \mu_t h_t(\vpi^*))\mright) \\
        & \le \cumg- \frac{\alpha}{1+\alpha} \sum_{t \in \cI} (\mu_t+\lambda_t)    +  \frac{\alpha}{1+\alpha} \sum_{t \in \cI} ( \lambda_t  + \mu_t)\\
        &\le \cumg \le \frac{3}{\alpha} \cumg,
    \end{align*}
    where the first inequality is by \cref{eq:safe modified}. The first condition of \cref{def: good policy} can be shown to hold with the same steps of \cref{lm:optimal policy adv}. This concludes the proof. 
\end{proof}

\begin{lemma}\label{lm:aux}
    For any time interval $\cI$ of length $k$, there exist a policy $\safep\in\Pi$ for which it holds
    \begin{equation*}
        \sum_{t \in \cI}\mleft( \lambda_t g_t(\safep)  +\mu_t h_t(\safep)\mright)\le 2 k^2 \eta_B- \alpha \sum_{t \in \cI}(\mu_t+\lambda_t).
    \end{equation*}
\end{lemma}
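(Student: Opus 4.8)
The plan is to leverage that the OGD dual iterates drift slowly, so that over a window of length $k$ they are nearly constant and the \emph{cumulative} feasibility granted by \cref{assumption new} can stand in for the pointwise feasibility used in \cref{lm:safeAdv}. First I would establish one-step stability of the multipliers: since $\lambda_{t+1}=P_{[0,1/\rho]}(\lambda_t+\etab g_t(x_t))$ and $\mu_{t+1}=P_{\R_+}(\mu_t+\etar h_t(x_t))$, non-expansiveness of the projections together with $|g_t(x_t)|,|h_t(x_t)|\le 1$ give $|\lambda_{t+1}-\lambda_t|\le\etab$ and $|\mu_{t+1}-\mu_t|\le\etar$. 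Consequently, for every $t$ in an interval $\cI=[t_1,t_2]$ of length $k$ we have $|\lambda_t-\lambda_{t_1}|\le k\etab$ and $|\mu_t-\mu_{t_1}|\le k\etar$.

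Next, let $\safep$ be the policy furnished by \cref{assumption new}, so that $\sum_{t\in\cI}g_t(\safep)\le-\alpha k$ and $\sum_{t\in\cI}h_t(\safep)\le-\alpha k$. The key algebraic move is to add the target penalty to the budget contribution and split off a constant multiplier:
\[
\sum_{t\in\cI}\lambda_t g_t(\safep)+\alpha\sum_{t\in\cI}\lambda_t=\lambda_{t_1}\sum_{t\in\cI}\bigl(g_t(\safep)+\alpha\bigr)+\sum_{t\in\cI}(\lambda_t-\lambda_{t_1})\bigl(g_t(\safep)+\alpha\bigr).
\]
The first summand is $\le 0$, because $\lambda_{t_1}\ge 0$ and $\sum_{t\in\cI}\bigl(g_t(\safep)+\alpha\bigr)\le 0$ by cumulative feasibility; the second is at most $\sum_{t\in\cI}|\lambda_t-\lambda_{t_1}|\,\bigl|g_t(\safep)+\alpha\bigr|\le k^2\etab$, where I use the stability bound and the fact that $g_t(\safep)+\alpha\in[-\rho+\alpha,\,1-\rho+\alpha]\subseteq[-1,1]$ (recall $\alpha\le\rho$, since absent the ROI constraint $\alpha$ collapses to $\rho$). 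This yields $\sum_{t\in\cI}\lambda_t g_t(\safep)\le k^2\etab-\alpha\sum_{t\in\cI}\lambda_t$. Running the identical argument on the ROI term, now with $|\mu_t-\mu_{t_1}|\le k\etar$ and $|h_t(\safep)+\alpha|\le 1+\alpha\le 2$, gives $\sum_{t\in\cI}\mu_t h_t(\safep)\le 2k^2\etar-\alpha\sum_{t\in\cI}\mu_t$.

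Finally I would add the two estimates and absorb the ROI slack using $2\etar\le\etab$ — which holds comfortably since the denominator of $\etar$ in \cref{lm:dual rm} is $\Omega(\sqrt T)$ with a large constant (it contains $16\cump$), whereas $\etab^{-1}=\rho\sqrt T\le\sqrt T$ — obtaining $\sum_{t\in\cI}\bigl(\lambda_t g_t(\safep)+\mu_t h_t(\safep)\bigr)\le 2k^2\etab-\alpha\sum_{t\in\cI}(\mu_t+\lambda_t)$, as claimed. The main obstacle is conceptual rather than computational: because $\safep$ is only feasible \emph{on average} over the window while the penalties $\lambda_t,\mu_t$ vary across it, one cannot factor the multipliers out of the sum as in the pointwise-feasible case, and the slow-drift bound on the iterates is exactly what repairs this — at the price of the $O(k^2\etab)$ slack, which is precisely why \cref{lm:new assumption} must require $k$ small relative to $\cumg/(T\etab)$. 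The only delicate bookkeeping is tracking the exact cardinality convention for $\cI$ (whether length $k$ means $k$ or $k+1$ points) and verifying that the range bounds on $g_t(\safep)+\alpha$ and the comparison $2\etar\le\etab$ are tight enough to land the stated constant $2$.
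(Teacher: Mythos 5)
Your proof is correct and takes essentially the same route as the paper's: both arguments combine the slow drift of the OGD iterates over a window of length $k$ (at most $k\etab$, resp.\ $k\etar$) with the cumulative feasibility of \cref{assumption new} to produce the $O(k^2\etab)$ slack, differing only in bookkeeping — you anchor the multipliers at $\lambda_{t_1},\mu_{t_1}$ and bound the deviation term, whereas the paper splits the sum by the sign of $g_t(\safep),h_t(\safep)$ and compares against $\lambdamax,\lambdamin,\mumax,\mumin$. Your version needs $2\etar\le\etab$ where the paper implicitly uses $\etar\le\etab$ (in claiming $G\le k\etab$); both hold for the stated learning rates since $\rho\le 1$ and $\etar^{-1}\ge 16\cump=\Omega(\sqrt{T})$, so this is immaterial.
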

\begin{proof}
    Given an interval $\cI$ of lenght $k$, let $(\lambdamax, \mumax)$ be the largest Lagrangian multipliers in the interval $\cI$, and let $(\lambdamin, \mumin)$ be the smallest Lagrangian multipliers in such interval.
    Let $G\defeq \max\mleft\{\lambdamax-\lambdamin, \mumax-\mumin\mright\}$. Then, we have $G\le k \etab$ since $\etab$ is more aggressive than $\etar$ (see \cref{sec:dual rm}), and there are at most $k$ gradient updates in the interval. Then,
    \begin{align*}
        \sum_{t \in \cI} \mleft(\lambda_t g_t(\safep)  +\mu_t h_t(\safep)\mright) &\le  \sum_{\substack{t \in \cI:\\g_t(\safep)>0}} \hspace{-.2cm}\lambdamax g_t(\safep) + \hspace{-.2cm}\sum_{\substack{t \in \cI:\\g_t(\safep)\le 0}}\hspace{-.2cm} \lambdamin g_t(\safep) +\hspace{-.2cm}\sum_{\substack{t \in \cI\\:h_t(\safep)>0}}\hspace{-.2cm} \mumax h_t(\safep) +\hspace{-.2cm} \sum_{\substack{t \in \cI:\\h_t(\safep)\le 0}}\hspace{-.2cm} \mumin g_t(\safep) \\
        &\le -\lambdamax \mleft(\sum_{\substack{t\in\cI:\\g_t(\safep)\le 0}} \hspace{-.2cm}g_t(\safep) + \alpha k\mright) + \sum_{\substack{t \in \cI:\\g_t(\safep)\le 0}}\hspace{-.2cm} \lambdamin g_t(\safep)\\
        &\hspace{3cm}-\mumax\mleft(\sum_{\substack{t\in\cI:\\h_t(\safep)\le 0}}\hspace{-.2cm} h_t(\safep) + \alpha k\mright)+\sum_{\substack{t \in \cI:\\h_t(\safep)\le 0}}\hspace{-.2cm} \mumin g_t(\safep)\\
        & \le kG  - \alpha k \lambdamax  + kG - \alpha k \mumax\\
        &\le 2 kG -  \alpha \sum_{t \in \cI} (\lambda_t+\mu_t)\\
        &\le 2 k^2 \etab -  \sum_{t \in \cI} (\lambda_t+\mu_t),
    \end{align*}
    where the second inequality comes from $\sum_{t \in \cI} g_t(\safep)\le -\alpha k$ and $\sum_{t \in\cI} h_t(\safep)\le -\alpha k$. This concludes the proof.
\end{proof}

\section{Applications}\label{app:app}

First, \Cref{app:proofprimal} proves the no-interval regret guarantees for \Cref{alg:primal regret minimizer}, thereby demonstrating that our framework can be applied for bidding in repeated non-truthful auctions. Then, \Cref{app:furtherapp} describes how to apply our primal-dual template to second-price auctions with continuous valuations and bids. 

\subsection{Non-Truthful Auctions: Details of \Cref{sec:app}}

Recently, many advertising platforms have been transitioning from the second-price auction format toward a first-price format \citep{akbarpour2018credible,despotakis2021first,paes2020competitive}.This is the case, for example, for Google’s Ad Manager and AdSense platforms \citep{admanager,wong21firstprice}.
It is not clear what is an appropriate online bidding strategy for a budget- and ROI-constrained bidder participating in a series of non-truthful auctions.
While second-price auctions are a truthful mechanism, meaning that bidders can bid their true value and maximize their utility, this is not the case for first-price auctions. This makes existing results for the second-price setting inapplicable to the non-truthful setting~\citep{balseiro2019learning,feng2022online}.

We show that our framework can be used to manage bidding in repeated non-truthful auctions under budget and ROI constraints. We will focus on the case of repeated first-price auctions, and we will make the simplifying assumption of having a finite set of possible valuations and bids. 
Extending our results to the continuous-bid setting is an interesting open problem, and it would amount to designing a suitable primal regret minimizer to plug into our framework. One option to accomplish this would be to adapt techniques designed for the unconstrained setting by \citet{han2020learning,han2020optimal}.

At each round $t\in [T]$, the bidder observes their valuation $v_t$ extracted from a finite set $\cV\subset [0,1]$ of $n$ possible valuations. The set $\cX\subset [0,1]$ is interpreted as the finite set of $\nbid$ possible bids. Let $\beta_t$ be the highest-competing bid at time $t$. In the value-maximizing utility model for each $t$ we have $f_t(x_t)\defeq v_t \indicator{x_t\ge \beta_t}$ \citep{babaioff2021non,balseiro2021landscape},  and the cost function is $c_t(x_t)\defeq x_t\indicator{x_t\ge \beta_t}$, where the indicator function $\indicator{x_t\ge \beta_t}$ specifies whether the bidder won the auction at time $t$. In general, we can handle any reward of the form $f_t(x_t)\defeq (v_t - \omega x_t)\indicator{x_t\ge \beta_t}$, with $\omega\in [0,1]$.
We extend the definition of policies from \Cref{sec:baselines} to model \emph{randomized bidding policies}. Each policy $\vpi\in\Pi$ is now a mapping $\vpi:\cV\rightarrow \Delta_\cX$. We denote by $\vpi(v)_x$ the probability of selecting $x$ under valuation $v$.

In order to apply \Cref{alg:meta alg} and obtain the guarantees of \Cref{thm:main stoc} and \Cref{thm:main adv} we have to design a suitable primal regret minimizer satisfying \Cref{eq:interval regret}. The following result is a rewriting of \Cref{thm:adaptive regret primal short} providing an explicit regret bound.

\begin{restatable}{theorem}{primalreg}\label{thm:adaptive regret primal}
    Let $\nval=1$, $\eta\defeq 1/\sqrt{\nbid T}$, $\xi\defeq 1/(2\sqrt{\nbids T})$, $\sigma\defeq 1/T$, and assume that $\eta\le 1/M_{[T]}$. For any $\delta>0$, \emph{EXP3-SIX} guarantees that, w.p. at least $1-\delta$, for any interval $\cI$, and for any $x\in\cX$, 
    \[
        \textstyle \sum_{t\in \cI} \mleft( \lossp(x_t) - \lossp(x)\mright)\le M_\cI^2 \cump[T,\delta],\,\,\textnormal{where}
    \]
    \[\cump[T,\delta] \hspace{-0.1cm}\defeq \hspace{-0.1cm} \mleft(\frac{3}{2} \hspace{-0.05cm} + \hspace{-0.05cm}4\log\mleft(\frac{\nbids T}{\delta}\mright)M_\cI^{-1}\hspace{-0.05cm}+\hspace{-0.05cm}\mleft(\log(T)\hspace{-0.06cm}+\hspace{-0.06cm}1\mright)M_\cI^{-2}\mright)\sqrt{mT}.\]
\end{restatable}

\subsection{Proof of \Cref{thm:adaptive regret primal short}}\label{app:proofprimal}

In order to proceed with the analysis of \Cref{alg:primal regret minimizer}, let $\vp_{t+1}\in[0,1]^m$ be the vector of \emph{pre-weights} for time $t+1$, which is defined as
\[
p_{t+1,x}\defeq \frac{\vpi_{t}(v_t)_x\, \textnormal{exp}\{-\eta \tilde \lossp(x)\}}{\sum_{x'\in\cX}\vpi_{t}(v_t)_{x'}\,\textnormal{exp}\{-\eta \tilde \lossp(x')\}}\qquad \textnormal{ for all }x\in\cX.
\]
Then, we have the following intermediate result.

\begin{restatable}{lemma}{lemmaDiff}
    \label{lemma: concentration}
    Let $\eta>0$ be such that $\eta \,\E_{\vpi}\tilde\lossp(x)<1$ for all $t\in[T]$ and $\pi\in \Pi$.
    Then, for any $t\in[T]$, and $x'\in\cX$, it holds
    \[
        \E_{\vpi_t(v_t)}\mleft[ \tilde\lossp(x)\mright] - \tilde\lossp(x')\le \frac{1}{\eta}\log\mleft(\frac{p_{t+1,x'}}{\vpi_{t}(v_t)_{x'}}\mright) + \frac{\eta}{2}\E\mleft[\tilde\lossp(x)^2\mright].
    \]
\end{restatable}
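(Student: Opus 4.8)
The plan is to prove this as a standard one-step exponential-weights (Hedge) inequality, isolating the normalization constant of the pre-weight update. First I would introduce the shorthand $Z_t\defeq \sum_{x\in\cX}\vpi_t(v_t)_x\,\exp\{-\eta\tilde\lossp(x)\}=\E_{\vpi_t(v_t)}[\exp\{-\eta\tilde\lossp(x)\}]$ for the normalizer appearing in the definition of the pre-weights. By the definition of $p_{t+1,x'}$ we have $p_{t+1,x'}/\vpi_t(v_t)_{x'}=\exp\{-\eta\tilde\lossp(x')\}/Z_t$, so that
\[
\frac{1}{\eta}\log\left(\frac{p_{t+1,x'}}{\vpi_t(v_t)_{x'}}\right)=-\tilde\lossp(x')-\frac{1}{\eta}\log Z_t.
\]
Substituting this into the right-hand side of the claim and cancelling the common term $-\tilde\lossp(x')$ on both sides, the inequality is seen to be equivalent to $\log Z_t\le -\eta\,\E_{\vpi_t(v_t)}[\tilde\lossp(x)]+\frac{\eta^2}{2}\,\E_{\vpi_t(v_t)}[\tilde\lossp(x)^2]$, which no longer involves $x'$.

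It then remains to bound $\log Z_t$. Since the primal loss $\lossp$ takes values in $\R_+$ and $\xi>0$, the importance-weighted estimate $\tilde\lossp(x)$ is nonnegative, hence $z\defeq\eta\tilde\lossp(x)\ge 0$. I would invoke the elementary inequality $e^{-z}\le 1-z+\frac{z^2}{2}$, valid for all $z\ge 0$, and take expectation over $x\sim\vpi_t(v_t)$ to obtain
\[
Z_t\le 1-\eta\,\E_{\vpi_t(v_t)}[\tilde\lossp(x)]+\frac{\eta^2}{2}\,\E_{\vpi_t(v_t)}[\tilde\lossp(x)^2].
\]
Writing the right-hand side as $1+u$ and applying $\log(1+u)\le u$ yields exactly the target bound on $\log Z_t$, closing the argument.

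The one place requiring care — and where the hypothesis $\eta\,\E_\vpi\tilde\lossp(x)<1$ is used — is ensuring $\log(1+u)\le u$ is applicable, i.e.\ that $1+u>0$ so the logarithm is well defined and in the regime where the inequality holds. Because the quadratic term is nonnegative, $1+u\ge 1-\eta\,\E_{\vpi_t(v_t)}[\tilde\lossp(x)]$, which is strictly positive precisely under the assumed condition. I do not anticipate a serious obstacle: the only subtlety is verifying these sign conditions (nonnegativity of $\tilde\lossp$ and positivity of $1+u$) so that the two elementary bounds $e^{-z}\le 1-z+\frac{z^2}{2}$ and $\log(1+u)\le u$ can each be applied in the correct regime.
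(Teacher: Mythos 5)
Your proposal is correct and follows essentially the same route as the paper's proof: both rest on the pointwise bound $e^{-z}\le 1-z+z^2/2$ for $z\ge 0$ applied to the normalizer $\E_{\vpi_t(v_t)}[e^{-\eta\tilde\lossp(x)}]$, followed by $\log(1+u)\le u$ (justified by the hypothesis $\eta\,\E_{\vpi}\tilde\lossp(x)<1$), and the identity expressing that normalizer through the pre-weight ratio $p_{t+1,x'}/\vpi_t(v_t)_{x'}$. Your version is merely a cleaner reorganization that makes the sign conditions explicit; no gap.
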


\begin{proof}
    Let $\vy\defeq \vpi_t(v_t)\in\Delta_\nbid$.
    By the fact that, for any $n\ge 0$, $e^{-n} \le 1 - n + n^2 / 2$, we have that
    \begin{align*}
        - \log \E_{\vy}\mleft[e^{-\eta\tilde\lossp(x)}\mright]& \ge - \log\mleft(1-\eta\E_{\vy}\mleft[\tilde\lossp(x)\mright]+\frac{\eta^2}{2}\E_{\vy}\mleft[\lossp(x)^2\mright]\mright)\\
        \log \E_{\vy}\mleft[e^{-\eta\tilde\lossp(x)}\mright] &\le -\eta\E_{\vy}\mleft[ \tilde\lossp(x)\mright]+\frac{\eta^2}{2}\E_{\vy}\mleft[\tilde\lossp(x)^2\mright],
    \end{align*}
    where the second inequality holds since, by assumption, $\eta\E_{\vy}\mleft[\tilde\lossp(x)\mright]<1$, which implies that the argument of the logarithm is strictly greater than 0.
    Then, by definition of the preweights $\vp_{t+1}$, we have that, for any $x'\in\cX$,
    \begin{align*}
        \eta\E_{\vy}\mleft[ \tilde\lossp(x)\mright]& \le - \log \E_{\vy}\mleft[e^{-\eta\tilde\lossp(x)}\mright] +\frac{\eta^2}{2}\E_{\vy}\mleft[\tilde\lossp(x)^2\mright]
        \\[2mm] &
       = - \log \mleft(\frac{\pi_{t,x'}e^{-\eta\tilde\lossp(x')}}{p_{t+1,x'}}\mright) +\frac{\eta^2}{2}\E_{\vy}\mleft[\tilde\lossp(x)^2\mright].
    \end{align*}
    This yields
    \[
    \E_{\vpi_t(v_t)}\mleft[ \tilde\lossp(x)\mright] - \tilde\lossp(x')\le \frac{1}{\eta}\log\mleft(\frac{p_{t+1,x'}}{\pi_{t,x'}(v_t)}\mright) + \frac{\eta}{2}\E_{\vpi_t(v_t)}\mleft[\tilde\lossp(x)^2\mright],
    \]
    for any possible alternative bid $x'\in\cX$.
\end{proof}

\primalreg*

\begin{proof}
    In order to increase the readability, we will write $\vpi_t$ in place of $\vpi_t(v)$ since $v\in\cV$ is constant throughout the proof (\ie $n=1$).
    
    By definition of $\tilde\lossp$, we have that for any $x\in\cX$ and $\vpi\in\Delta_\cX$, $\E\tilde\lossp(x)\le\E\mleft[\indicator{x=x_t}\lossp(x)/\vpi_x\mright]=\lossp(x)$. Therefore, since by assumption we have $\eta<1/M_{[T]}$, where $M_{[T]}$ is the maximum range of the loss functions $\lossp$ over the time horizon, the assumption of \Cref{lemma: concentration} holds. 
    Then, for any interval $[t_1,t_2]$, with $t_1,t_2\in[T]$, $t_1<t_2$, by \Cref{lemma: concentration} we have that for any $x'\in \cX$, 
    \[
        \sum_{t \in [t_1,t_2]} \mleft(\E_{\vpi_t} \mleft[\tilde \lossp(x) \mright]- \tilde\lossp(x') \mright) \le \sum_{t \in [t_1,t_2]}\mleft( \frac{1}{\eta}\log\mleft(\frac{p_{t+1,x'}}{\pi_{t,x'}}\mright) + \frac{\eta}{2}\E_{\vpi_t}\mleft[\tilde\lossp(x)^2\mright]\mright).
    \]
    
    Moreover we have that  
    \begin{align*}
        \sum_{t \in [t_1,t_2]}\log\mleft(\frac{p_{t+1,x'}}{\vpi_{t,x'}}\mright) & = \log\mleft(\frac{1}{\vpi_{t_1,x'}}\mright) + \sum_{t\in[t_1+1,t_2]} \log \mleft(\frac{p_{t,x'}}{\vpi_{t,x'}}\mright) + \log\mleft(p_{t_2+1,x'}\mright)
        \\[2mm] & 
        \le \log\mleft(\frac{\nbid}{\sigma}\mright) + \sum_{t\in[t_1+1,t_2]}\log\mleft(\frac{1}{1-\sigma}\mright).
    \end{align*}
    The last inequality holds since, for any $t\in [T]$ and $x\in\cX$, 
    \begin{align*}
        \vpi_{t+1,x} & =\frac{(1-\sigma) w_{t,x} e^{-\eta\tilde \lossp(x)} + \frac{\sigma}{m} \sum_{i\in \cX} w_{t,i} e^{-\eta\tilde \lossp(i)}}{\sum_{i\in \cX} \mleft((1-\sigma)w_{t,i}e^{-\eta\tilde\lossp(i)} + \frac{\sigma}{m} \sum_{j\in \cX} w_{t,j} e^{-\eta\tilde \lossp(j)} \mright)}
        \\[3mm] & 
        = \frac{(1-\sigma) w_{t,x} e^{-\eta\tilde \lossp(x)} + \frac{\sigma}{m} \sum_{i\in \cX} w_{t,i} e^{-\eta\tilde \lossp(i)}}{\sum_{i\in \cX} w_{t,i}e^{-\eta\tilde\lossp(i)}}
        \\[3mm] &
        \ge (1-\sigma)\frac{ w_{t,x} e^{-\eta\tilde \lossp(x)}}{\sum_{i\in\cX}w_{t,i}}\frac{\sum_{i\in\cX}w_{t,i}}{\sum_{i\in \cX} w_{t,i}e^{-\eta\tilde\lossp(i)}}
        \\[3mm] & 
        = (1-\sigma) \frac{\vpi_{t,x}e^{-\eta\tilde \lossp(x)}}{ \E_{\vpi_t}\mleft[e^{-\eta\tilde\lossp(i)}\mright]}
        \\[3mm] & 
        = (1-\sigma) p_{t+1,x},
    \end{align*}
    where we used the definition of $\vpi_t$ and $\vp_t$ as per \Cref{alg:primal regret minimizer}.
    
    Then, 
    \begin{equation}\label{eq:hence}
        \sum_{t \in [t_1,t_2]} \mleft(\E_{\vpi_t}\tilde \lossp(x) - \tilde\lossp(x')\mright)  \le \frac{1}{\eta}\mleft(\log\mleft(\frac{\nbid}{\sigma}\mright) + (t_2-t_1-1) \log \mleft(\frac{1}{1-\sigma}\mright)\mright) + \frac{\eta}{2} \sum_{t \in [t_1,t_2]}\E_{\vpi_t}\mleft[\tilde\lossp(x)^2\mright].
    \end{equation}

    \citet[Lemma 1]{neu2015explore} states that given a fixed non-increasing sequence $(\xi_t)$ with $\xi_t\ge 0$, and by letting $\beta_{t,i}$ be a nonnegative random variable such that $\beta_{t,i}\le 2\xi_t$ for all $t$ and $i\in \cX$, then with probability at least $1-\delta$,
    \[
    \sum_{t\in[T]}\sum_{i\in\cX}\beta_{t,i}\mleft(\tilde\lossp(i) -\lossp(i)\mright)\le \log(1/\delta). 
    \] 
    Then, for any bid $i\in\cX$, by setting
    \[
        \beta_{t,j}=\mleft\{\begin{array}{lll}
            \displaystyle
            2\xi_t\indicator{i=j} & \text{\normalfont if } t\in\cI  \\
            0 & \text{\normalfont otherwise}
        \end{array}\mright.,
    \]
    and by applying a union bound we obtain that, with probability at least $1-\delta$,
    \begin{equation}\label{eq:neu cor}
        \sum_{t\in \cI} \mleft(\tilde \lossp(i) - \lossp(i)\mright)\le \frac{M_\cI \log\mleft(\nbid/\delta\mright)}{2 \xi},
    \end{equation}
    where $M_\cI$ is the maximum range of the loss functions $\lossp$ over time interval $\cI$.

    Moreover, from the definition of $\tilde\lossp$ (see \Cref{alg:primal regret minimizer}), we have that
    \begin{equation}\label{eq:x}
        \sum_{t \in \cI} \E_{x\sim\vpi_t}\tilde \lossp(x)= \sum_{t\in \cI} \mleft(\lossp(x_t) - \sum_{x\in\cX} \xi\tilde\lossp(x)\mright).
    \end{equation}
    Finally, given $t\in\cI$, we observe that 
    \begin{equation}\label{eq:moreover}
        \E_{x\sim\vpi_t} \tilde \lossp(x)^2 = \sum_{x\in\cX} (\vpi_{t,x} \,\tilde \lossp(x))\, \tilde \lossp(x) \le M_\cI  \sum_{x\in \cX}\tilde \lossp(x). 
    \end{equation}
    
    Finally, we conclude by showing that, for any $x\in\cX$, with probability at least $1-\delta$, 
    \begin{align*}
        \sum_{t\in [t_1,t_2]} \mleft( \lossp(x_t) - \lossp(x)\mright) & \le \sum_{t\in [t_1,t_2]}\lossp(x_t) + \frac{M_\cI\log(\nbid/\delta)}{2\xi} - \sum_{t\in [t_1,t_2]}\tilde\lossp(x) \hspace{.5cm} (\text{\normalfont by \cref{eq:neu cor}})
        \\[3mm] &
        = \frac{M_\cI\log(\nbid/\delta)}{2\xi} + 
        \sum_{t\in [t_1,t_2]} \sum_{i\in \cX}\xi \tilde\lossp(i) +
        \sum_{t\in [t_1,t_2]} \mleft( \E_{\vpi_t}\tilde\lossp(j) - \tilde\lossp(x)  \mright) \\& \hspace{7.8cm} (\text{\normalfont by \cref{eq:x}})
        \\[3mm] &
        \le \frac{M_\cI\log(\nbid/\delta)}{2\xi} + 
        \sum_{t\in [t_1,t_2]} \sum_{i\in \cX}\xi \tilde\lossp(i)+ \frac{\eta}{2} M_\cI  \sum_{t \in [t_1,t_2]} \sum_{i\in \cX}\tilde \lossp(i)
        \\[3mm] &
        \hspace{2.7cm} + \frac{1}{\eta}\mleft(\log\mleft(\frac{\nbid}{\sigma}\mright) + (t_2-t_1-1) \log \mleft(\frac{1}{1-\sigma}\mright)\mright)
        \\&\hspace{5cm} (\text{\normalfont by \cref{eq:hence} and \cref{eq:moreover}})
        \\[3mm] &
        \le \frac{M_\cI\log(\nbid/\delta)}{2\xi} + (t_2-t_1)\nbid\xi M_\cI 
        + (t_2-t_1)\frac{\eta M_\cI^2\nbid}{2}
        \\[3mm] &
        \hspace{2.8cm}
        +\frac{1}{\eta}\mleft(\log\mleft(\frac{\nbid}{\sigma}\mright) + (t_2-t_1-1) \log \mleft(\frac{1}{1-\sigma}\mright)\mright) .
    \end{align*}

    By setting $\displaystyle\eta=\frac{1}{\sqrt{\nbid T}}$, $\displaystyle\xi=\frac{1}{2\sqrt{\nbids T}}$, with probability at least $1-\delta$,
    \begin{align*}
        \hspace{-1cm}\sum_{t\in [t_1,t_2]} \mleft( \lossp(x_t) - \lossp(x)\mright) &\le \frac{3}{2}M_\cI^2(t_2-t_1)\sqrt{\frac{\nbid}{T}} + \sqrt{\nbid T}\mleft(M_\cI\log\mleft(\frac{m}{\delta}\mright)+\log\mleft(\frac{m}{\sigma}\mright)+(T-1)\log\mleft(\frac{1}{1-\sigma}\mright)\mright)
        \\&
        \le \frac{3}{2}M_\cI^2(t_2-t_1)\sqrt{\frac{\nbid}{T}} + \sqrt{mT}\mleft(2M_\cI\log\mleft(\frac{m}{\delta}\mright)-\log\mleft(\sigma(1-\sigma)^{T-1}\mright)\mright).
    \end{align*}
    By letting $h(z)\defeq-z\log z-(1-z)\log(1-z)$ be the binary entropy function for $z\in[0,1]$, we have that, for $z\in[0,1]$, $h(z)\le z\log(e/z)$ (see, \eg \citet[Corollary 1]{cesa2012mirror}).
    Then, for $\sigma=1/T$, we have that $-\log\sigma(1-\sigma)^{T-1}\le \log(eT)$.
    This yields 
    \[
        \sum_{t\in [t_1,t_2]} \mleft( \lossp(x_t) - \lossp(x)\mright)  \le   \frac{3}{2}M_\cI^2(t_2-t_1)\sqrt{\frac{\nbid}{T}} + \sqrt{mT}\mleft(2M_\cI\log\mleft(\frac{m}{\delta}\mright)+\log(eT)\mright).
    \]
 
    By taking a union bound over all possible intervals $[t_1,t_2]$ we obtain that, with probability at least $1-\delta$,
    \begin{align*}
        \sum_{t\in [t_1,t_2]} \mleft( \lossp(x_t) - \lossp(x)\mright)& \le \frac{3}{2}M_\cI^2(t_2-t_1)\sqrt{\frac{\nbid}{T}} + \sqrt{\nbid T}\mleft(4M_\cI\log\mleft(\frac{\nbids T}{\delta}\mright)+\log(T)+1\mright)\\[2mm]
        & \le M_\cI^2\mleft(\frac{3}{2} + \frac{4}{M_\cI}\log\mleft(\frac{\nbids T}{\delta}\mright)+\frac{\log(T)+1}{M_\cI^2}\mright)\sqrt{mT},
    \end{align*}
    which proves our statement.
    \end{proof}

\subsection{Further Applications: Repeated Second-Price Auctions}\label{app:furtherapp}
In this section, we show that our framework can also be employed by a budget- and ROI-constrained bidder participating in a sequence of repeated second-price auctions. Our model expands upon that of \citet{balseiro2019learning} and \citet{balseiro2022best}, who considered solely budget constraints (though they also considered some non-stationary input models that we do not consider), as well as \citet{feng2022online}, who considered budget- and ROI-constraints, but only the stochastic input model. 

Let $\cV\subseteq [0,1]$ be the set of possible valuations, and let $\cX\subseteq[0,1]$ be the set of available bids.
Let $\beta_t$ be the highest competing bid at time $t$. In second-price auctions, the utility function of the bidder at time $t$ is $f_t:\cX\ni x\mapsto (v_t - \omega \beta_t)\indicator{x\ge \beta_t}$, and the cost function is $c_t:\cX\ni x\mapsto \beta_t\indicator{x\ge \beta_t}$, where the indicator function $\indicator{x\ge \beta_t}$ specifies whether the bidder won the auction. 
The parameter $\omega\in [0,1]$ represents the bidder's \emph{private capital cost} which normalizes the bidder's accumulated valuation with the overall expenditure. This model includes the traditional quasi-linear set-up (\ie $\omega=1$), as well as the value-maximizing utility model (\ie $\omega=0$) (see, \eg \citet{babaioff2021non,balseiro2021landscape}). To simplify the notation, let's assume $\omega=1$. However, it's important to note that the same results readily apply to any $\omega\in [0,1]$.

At each time $t\in [T]$, the bidder observes valuation $v_t\in\cV$, and subsequently chooses a bid $x_t\in\cX$. Then, the bidder observes the realized reward $f_t(x_t)$ and cost $c_t(x_t)$.

By leveraging the truthfulness property of second-price auctions, we now show that it is possible to substitute the primal regret minimizer with a simple closed-form solution. This solution suitably rescales the valuation $v_t$ using the current values of dual multipliers $\lambda_t$ and $\mu_t$. 
 A similar trick is used e.g. in \citet{balseiro2019learning} for only the Lagrange multiplier from the budget constraint.
We start by considering the baseline for primal regret. Given a sequence of inputs $\vgamma$, we have 
\begin{align*}
    x^\ast\in\argmax_{x\in\cX}\mleft( -\sum_{t=1}^T\lossp(x)\mright) &=\argmax_{x\in\cX} \sum_{t=1}^T\mleft(  f_t(x) +\lambda_t (\rho-c_t(x))+\mu_t (f_t(x)-c_t(x))\mright)
    \\&
    =\argmax_{x\in\cX}\sum_{t=1}^T\indicator{x\ge\beta_t}\underbrace{\mleft(v_t-\beta_t-\lambda_t\beta_t+\mu_tv_t-2\mu_t\beta_t\mright)}_{\circled{A}}.
\end{align*}

At each $t$ there are three possible cases: (i) the bidder wants to win the item at time $t$ if $\circled{A}> 0$; (ii) the bidder has to pass on the item if  $\circled{A}< 0$; (iii) the bidder is indifferent if equality holds. Even though the bidder does not know $\beta_t$ at the time of choosing $x_t$, they can achieve such behavior within a second-price auction framework by bidding, at each $t\in [T]$,
\begin{equation}\label{eq:second}
x_t=\frac{(1+\mu_t)v_t}{1+\lambda_t+2\mu_t}.
\end{equation}
The value obtained by following such bidding policy is such that, for any time interval $\cI$ we have 
\[
\sum_{t\in\cI}^T\mleft[v_t-\beta_t-\lambda_t\beta_t+\mu_tv_t-2\mu_t\beta_t\mright]^+\ge \max_{x\in\cX}\sum_{t\in\cI}\indicator{x\ge\beta_t}\mleft(v_t-\beta_t-\lambda_t\beta_t+\mu_tv_t-2\mu_t\beta_t\mright),
\]
where $[x]^+\defeq\max\{x,0\}$.
Therefore, we have that, with probability 1, the regret accumulated by the primal regret minimizer following \Cref{eq:second} is 0.

\begin{algorithm}[tb]
    \caption{Primal-dual framework for second-price auctions.}
    \label{alg:secondprice}
 \begin{algorithmic}
    \STATE {\bfseries Input:} parameters $B,T$; 
    \STATE {\bfseries Initialization:} $B_1\gets B$;  $\lambda_1=\mu_1=0$
    \FOR{$t = 1, 2, \ldots , T$}
         \STATE{\bfseries{Observe}} valuation $v_t$
         \STATE{\bfseries Primal decision:} $$x\gets \frac{(1+\mu_t)v_t}{1+\lambda_t+2\mu_t}$$
         \STATE Then,
         \[
         x_t\gets \mleft\{\hspace{-1.25mm}\begin{array}{l}
                 \displaystyle
                 x \hspace{.5cm}\text{\normalfont if } B_{t} \ge 1\\ [2mm]
                 \nullx \hspace{.5cm}\text{otherwise}
             \end{array}\mright.
         \]
     \STATE {\bfseries Observe} $f_t(x_t),c_t(x_t)$ and update available resources: $$B_{t+1}\gets B_{t} - c_t(x_t)$$
    \STATE {\bfseries Dual Update:}
    \[\lambda_{t+1} \gets  P_{[0,1/\rho]}\mleft(\lambda_{t}+ \etab \mleft(c_t(b_t)-\rho\mright)\mright)\quad\textnormal{and}\quad
    \mu_{t+1} \gets P_{\R_+}\mleft[\mu_{t}+  \etar \mleft( c_t(b_t) - f_t(b_t) \mright)\mright]\]
    \ENDFOR
 \end{algorithmic}
 \end{algorithm}

 \Cref{alg:secondprice} provides an instantiation of the dual-balancing primal-dual template described in \Cref{alg:meta alg} for the case of second-price auctions. The primal step follows \Cref{eq:second}, while the dual updates follow \Cref{sec:dual rm}. We have that \Cref{thm:main stoc,thm:main adv} hold with 
 \[
    \reg(T,\alpha,\delta)\defeq \nicefrac{1}{\alpha}  +\mleft(\nicefrac{3}{\alpha}+1\mright)\cumg+\cumdr+\cumdb.
 \]
 Therefore, \Cref{alg:secondprice} matches the $\tilde O(\sqrt{T})$ regret upper bound by \citet{feng2022online} in the stochastic setting, while also providing guarantees for in the adversarial setting. This is the first algorithm with provable best-of-both-worlds guarantees for the problem of bidding in repeated second-price auctions under budget and ROI constraints.

\end{document}

% --- supplement: supplement.tex ---

\onecolumn
\aistatstitle{Instructions for Paper Submissions to AISTATS 2024: \\
Supplementary Materials}

\section{FORMATTING INSTRUCTIONS}

To prepare a supplementary pdf file, we ask the authors to use \texttt{aistats2024.sty} as a style file and to follow the same formatting instructions as in the main paper.
The only difference is that the supplementary material must be in a \emph{single-column} format.
You can use \texttt{supplement.tex} in our starter pack as a starting point, or append the supplementary content to the main paper and split the final PDF into two separate files.

Note that reviewers are under no obligation to examine your supplementary material.

\section{MISSING PROOFS}

The supplementary materials may contain detailed proofs of the results that are missing in the main paper.

\subsection{Proof of Lemma 3}

\textit{In this section, we present the detailed proof of Lemma 3 and then [ ... ]}

\section{ADDITIONAL EXPERIMENTS}

If you have additional experimental results, you may include them in the supplementary materials.

\subsection{The Effect of Regularization Parameter}

\textit{Our algorithm depends on the regularization parameter $\lambda$. Figure 1 below illustrates the effect of this parameter on the performance of our algorithm. As we can see, [ ... ]}

\vfill